\DeclareMathSymbol{\leq}{\mathrel}{symbols}{20}
\DeclareMathSymbol{\geq}{\mathrel}{symbols}{21}
\newtheoremstyle{WreschTheoremstyle} 
                        {1.5em}    
                        {2.5em}    
                        {}         
                        {}         
                        {\bfseries}
                        {}        
                        {\newline} 
                        {\raisebox{0.6em}{\thmname{#1}\thmnumber{#2}\thmnote{ (#3)}}}
\newcommand{\R}{\mathbb{R}}
\newcommand{\N}{\mathbb{N}}
\newcommand{\e}{\varepsilon}
\newtheorem{Theorem}{Theorem}[section]
\newtheorem{Corollary}[Theorem]{Corollary}
\newtheorem{Lemma}[Theorem]{Lemma}
\newtheorem{Remark}[Theorem]{Remark}
\numberwithin{equation}{section}
\newcommand{\customlabel}[1]{%
     \stepcounter{ref}%
   \protected@write
\@auxout{}{\string\newlabel{#1}{{\thesatz.\arabic{ref}}{\thepage}{\thesatz.\arabic{ref}}{#1}{}}}%
   \hypertarget{#1}{\thesatz.\arabic{ref}}%
}
\newenvironment{sciabstract}{\begin{quote}}{\end{quote}}
\newcounter{lastnote}
\title{Weak-coupling limit for ergodic environments}
\newcommand{\pdftitle} {Weak-coupling limit for ergodic environments}
\newcommand{\pdfauthor}{Martin Friesen}
\author{
Martin Friesen\footnote{Fakult\"at f\"ur Mathematik und Naturwissenschaften, Bergische Universit\"at Wuppertal, Gaußstraße 20, 42119 Wuppertal, Germany, friesen@math.uni-wuppertal.de}\\
Yuri Kondratiev\footnote{Department of Mathematics, Bielefeld University, Germany, kondrat@math.uni-bielefeld.de}
}
\def\HyPsd@CatcodeWarning#1{}
\begin{document}

\maketitle

\begin{sciabstract}\textbf{Abstract:}
 The main aim of this work is to establish an averaging principle for a wide class of interacting particle systems in the continuum.
 This principle is an important step in the analysis of Markov evolutions and is usually applied for the associated semigroups related to  backward Kolmogorov equations,  c.f. \cite{KURTZ73}.
 Our approach is based on the study of  forward Kolmogorov  equations (a.k.a.  Fokker-Planck equations).
 We  describe a system evolving as a Markov process on the space of finite configurations, whereas its rates depend on the actual state of another (equilibrium) process
 on the space of locally finite configurations. We will show that ergodicity of the environment process implies the averaging principle for the solutions of the 
 coupled Fokker-Planck equations. 
\end{sciabstract}

\noindent \textbf{AMS Subject Classification: } 37N25, 46N30, 46N55, 47N30, 92D1 \\
\textbf{Keywords: } Averaging principle; Fokker-Planck equation; Interacting particle systems; Weak-coupling; Random evolution

\section{Introduction}

This work is devoted to the study of interacting particle systems with a continuous state space, say $\R^d$.
Particles are supposed to be indistinguishable completely determined by their
positions denoted by $x \in \R^d$.
Particular models are used in various fields such as physics, chemistry, ecology, medicine and even social sciences, 
where it is usually supposed that particles are subject to some Markovian dynamics including elementary events such as birth, deaths and jumps.
A rigorous study of these models by stochastic differential equations
is, e.g., performed in \cite{HK90, FM04, BEZ151, FM16, X18, FRS18}
while analytic tools have been used in \cite{KOL03, KOL06, MU-FA04, EW03}.
Note that all models mentioned above assume that the total number of particles is finite at any moment of time, i.e. they are modeled on the state space 
of locally finite configurations
\[
 \Gamma_0 = \{ \eta \subset \R^d \ | \ |\eta| < \infty \},
\]
where $|A|$ denotes the number of elements in the set $A$.
Such space can be equipped with a natural topology such that it becomes a locally compact Polish space. 

In this work we study such particle system in the presence of an environment
described by another particle system on the space of locally finite configurations
\[
 \Gamma = \{\gamma \subset \R^d \ | \ |\gamma \cap K| < \infty \ \ \text{ for all compacts } K \subset \R^d \}.
\]
In order to distinguish between $\Gamma$ and $\Gamma_0$, we use $\gamma$ for elements in $\Gamma$, while $\eta,\xi,\zeta$ belong to $\Gamma_0$. We endow $\Gamma$ with the smallest topology such that, for any continuous function $f: \R^d \longrightarrow \R$ with compact support,
$\gamma \longmapsto \sum_{x \in \gamma}f(x)$ is continuous. It can be shown that $\Gamma$ is a Polish space, see \cite{KK06}.
Note that, in contrast to $\Gamma_0$, this space is not locally compact.

Let us describe the general form of the dynamics (system and environment) studied in this work. For a fixed configuration of the environment $\gamma \in \Gamma$,
dynamics of the system is supposed to be given by the heuristic Markov operator
\begin{align}\label{EQ:11}
 (L^S(\gamma)F)(\eta) = \sum \limits_{\xi \subset \eta}\int \limits_{\Gamma_0}(F(\eta \backslash \xi \cup \eta) - F(\eta))K(\gamma,\xi,\eta,d\zeta),
\end{align}
where $K(\gamma,\xi,\eta,d\zeta) \geq 0$ describes the infinitesimal transition rate 
for the elementary Markov event $\eta \longmapsto \eta \backslash \xi \cup \zeta$.
Such transition rate should satisfy some reasonable assumptions, and can be seen as a continuum analogue of the Kolmogorov matrix known from the theory of Markov chains on countable state spaces. Denote by $L^E$ the Markov generator for the Markov evolution of the environment on $\Gamma$, i.e. an unbounded operator acting on a suitable class of functions $F: \Gamma \longrightarrow \R$.
The corresponding Markov process 
for the joint evolution, system and environment, can be formally obtained 
from the (backward) Kolmogorov equation
\begin{align}\label{EQ:00}
 \frac{d}{dt}F_t = \left(L^S + L^E\right)F_t, \ \ F_t|_{t=0} = F_0, \ \ t \geq 0,
\end{align}
where $F_t: \Gamma_0 \times \Gamma \longrightarrow \R$,
$L^S = L^S(\gamma)$ is given by \eqref{EQ:11} and acts only on the variable $\eta$,
while $L^E$ acts only in the $\gamma$ variable.
Note that such description is only heuristic, i.e. in this generality the
corresponding Markov process does not need to exist
and, moreover, equation \eqref{EQ:00} does not need to have any solution at all. 

Ignoring for a moment the construction of solutions to \eqref{EQ:00}, let us denote by $(\eta_t,\gamma_t) \in \Gamma_0 \times \Gamma$
the Markov process obtained by formally solving \eqref{EQ:00}. 
Since $L^S$ is assumed to depend on $\gamma$, it is clear that $\eta_t$ 
alone is not a Markov process on $\Gamma_0$.
However, for different regimes of parameters one may still hope that
the system process $\eta_t$ is at least close to a Markovian process in some reasonable sense.
From the mathematical point of view the latter one 
results in the requirement to find a certain scaling $(\eta_t^{\e}, \gamma_t^{\e})$,
$\e > 0$, and show that $\eta_t^{\e} \longrightarrow \overline{\eta}_t$
for $\e \to 0$, where $\overline{\eta}_t$ is a Markovian process.
Therefore, such an approximation is a 
particular case of Markovian limits as discussed in \cite{S80}.

If the environment process is ergodic with invariant measure $\mu$, 
then the weak-coupling limit, 
which is a particular case of so-called random evolution framework, see, e.g., \cite{PINSKY, SHS},
can be used to approximate $\eta_t$ by a Markov process obtained from the averaged Markov operator
\[
 (\overline{L}F)(\eta) = \sum \limits_{\xi \subset \eta}\int \limits_{\Gamma_0}( F(\eta \backslash \xi \cup \zeta) - F(\eta)) \overline{K}(\xi,\eta,d\zeta),
\]
where $\overline{K}(\xi,\eta, d\zeta) = \int_{\Gamma}K(\gamma,\xi,\eta, d\zeta)d\mu(\gamma)$.
More precisely, consider, for $\e \in (0,1)$, the scaled (backward) Kolmogorov equation
\begin{align}\label{EQ:04}
 \frac{d}{dt}F_t^{\e} = \left( L^S + \frac{1}{\e}L^E\right)F_t^{\e}, \ \ F_t^{\e}|_{t=0} = F_0, \ \ t \geq 0,
\end{align}
where the initial condition $F_0 = F_0(\eta)$ is supposed to be independent of the variable $\gamma$.
Then one expects that $F_t^{\e} \longrightarrow \overline{F}_t$ as $\e \to 0$ and $\overline{F}_t = \overline{F}_t(\eta)$ solves
\[
 \frac{d}{dt}\overline{F}_t = \overline{L}\overline{F}_t, \ \ \overline{F}_t|_{t= 0} = F_0(\eta), \ \ t \geq 0.
\]
Such a formal scheme was established for various situations based on the theory of stochastic equations 
or on a detailed study of the (backward) Kolmogorov equation.
However, at present there exist no methods for the rigorous study of 
\eqref{EQ:04}.

In this work we propose another approach to study the weak-coupling limit 
$\e \to 0$. Namely, let $\mathcal{L}_{\mu} := L^1(\Gamma_0 \times \Gamma,\lambda \otimes \mu)$ and consider the 
evolution of densities obtained from the Fokker-Planck equation
\begin{align}\label{EQ:01}
 \frac{d}{dt} \rho_t^{\e}(\eta,\gamma) = \left( L^S + \frac{1}{\e}L^E\right)^* \rho_t^{\e}, \  \ \rho_t^{\e}|_{t= 0} = \rho_0 \in L^1(\Gamma_0,\lambda),
\end{align}
where $\left( L^S + \frac{1}{\e}L^E\right)^*$ denotes the adjoint operator to $L^S + \frac{1}{\e}L^E$.
Note that $\rho_t^{\e}$ describes the one-dimensional distributions of the corresponding Markov process $(\eta_t^{\e}, \gamma_t^{\e})$,
provided it exists. 
Hence we seek to prove that $\rho_t^{\e} \longrightarrow \overline{\rho}_t$ in $\mathcal{L}_{\mu}$ as $\e \to 0$, 
and, moreover, show that $\overline{\rho}_t \in L^1(\Gamma_0, \lambda)$ is independent of $\gamma$ satisfying the Fokker-Planck equation
\begin{align}\label{EQ:03}
 \frac{d}{dt}\overline{\rho}_t = \overline{L}^*\overline{\rho}_t, \ \ \overline{\rho}_t|_{t=0} = \rho_0,
\end{align}
where $\overline{L}^*$ denotes the adjoint operator to $\overline{L}$.
In \cite{FK18} we have shown by a different approach that for more specific models of birth-and-death type the restriction that the system dynamics evolves in $\Gamma_0$ can be dropped,
i.e. a similar result was obtained for spatial birth-and-death processes on the larger state space $\Gamma \times \Gamma$.
Contrary to this, the result obtained in this work applies to a significantly larger class of dynamics.

This work is organized as follows.
In Section 2 we discuss the construction of Markovian particle systems
on $\Gamma_0$ with rates independent of the environment. 
Our main result of this work is then formulated and proved in Section 3.
Finally, a particular example is discussed in the last Section of this work.

\section{Some results for finite particle systems}

\subsection{Space of finite configurations}
Set $\Gamma_0^{(0)} = \{ \emptyset\}$ and, for $n \geq 1$, 
$\Gamma_0^{(n)} = \{ \eta \subset \mathbb{R}^d\ | \ \ |\eta| = n\}$. Then
\[
 \Gamma_0 = \{ \eta \subset \mathbb{R}^d\ | \ \ |\eta| < \infty \} = \bigsqcup \limits_{n = 0}^{\infty}\Gamma_0^{(n)},
\]
where $|A|$ denotes the number of elements in the set $A \subset \R^d$.
Let us describe the topology used on $\Gamma_0$.
Denote by $\widetilde{(\mathbb{R}^d)^n}$ the collection of all $(x_1, \dots, x_n) \in (\mathbb{R}^d)^n$ with $x_i \neq x_j$ for $i \neq j$, and set
\[
 \mathrm{sym}_n: \widetilde{(\mathbb{R}^d)^n} \to \Gamma_0^{(n)}, \ (x_1, \dots, x_n) \longmapsto \{x_1, \dots, x_n\}.
\]
A set $A \subset \Gamma_0$ is said to be open iff 
$\mathrm{sym}_n^{-1}(A \cap \Gamma_0^{(n)}) \subset \widetilde{(\mathbb{R}^d)^n}$ is open for all $n \geq 0$ in the relative topology on $(\R^d)^n$.
It can be shown that $\Gamma_0$ equipped with this topology is a locally compact Polish space \cite{BDKMO18}. Moreover, the corresponding Borel-$\sigma$-algebra
$\mathcal{B}(\Gamma_0)$ is generated by cylinder sets 
$\{ \eta \in \Gamma_0 \ | \ |\eta \cap \Lambda| = n \}$,
where $n \geq 0$ and $\Lambda \subset \R^d$ is compact.

The Lebesgue-Poisson measure $\lambda$ on $\Gamma_0$ is defined by the relation
\[
 \int \limits_{\Gamma_0}G(\eta) d\lambda(\eta) = G(\{\emptyset \}) + \sum \limits_{n=1}^{\infty}\frac{1}{n!}\int \limits_{(\R^d)^n}G(\{x_1,\dots, x_n\})dx_1\dots dx_n,
\]
where $G$ is any Borel-measurable non-negative function on $\Gamma_0$.
This measure satisfies, for any measurable function $G: \Gamma_0 \times \Gamma_0 \longrightarrow \R$, the integration by parts formula
\begin{align}\label{IBP}
 \int \limits_{\Gamma_0}\sum \limits_{\xi \subset \eta}G(\xi,\eta \backslash \xi) d\lambda(\eta)
 = \int \limits_{\Gamma_0}\int \limits_{\Gamma_0}G(\xi,\eta)d\lambda(\xi)d\lambda(\eta),
\end{align}
provided one side of the equality is finite for $|G|$, see \cite[Appendix]{F17}.

\subsection{Markovian dynamics on $\Gamma_0$}
In this section we briefly describe Markovian dynamics on $\Gamma_0$
consisting of elementary Markovian events such as
\[
 \eta \longmapsto \eta \backslash \xi \cup \zeta, \qquad \zeta \subset \R^d \backslash (\eta \backslash \xi), \ \ \xi \subset \eta.
\]
Such events should occur with infinitesimal transition 
rate $K: \Gamma_0 \times \Gamma_0 \times \Gamma_0 \longrightarrow \mathbb{R}_+$ satisfying
\begin{enumerate}
 \item[(K)] The map $(\xi, \eta, \zeta) \longmapsto K(\xi, \eta, \zeta)$ is jointly Borel-measurable and
 \[
  \int \limits_{\Gamma_0}K(\xi,\eta,\zeta)d\lambda(\zeta) < \infty, \ \ \forall \eta, \xi \in \Gamma_0.
 \]
\end{enumerate}
Denote by $BM(\Gamma_0)$ the Banach space of all bounded measurable functions equipped with the supremum norm.
For $F \in BM(\Gamma_0)$ define 
\begin{align}\label{KOLMOGOROV}
 (AF)(\eta) = \sum \limits_{\xi \subset \eta}\int \limits_{\Gamma_0}(F(\eta \backslash \xi \cup \zeta) - F(\eta))K(\xi, \eta, \zeta)d\lambda(\zeta), \ \ \eta \in \Gamma_0.
\end{align}
Note that $AF$ is pointwisely well-defined but, in general, does not need to be bounded.
Such operator is supposed to describe a pure-jump Markov process on $\Gamma_0$, which may have an explosion.
This can be seen from the following representation given below.
For $\eta \in \Gamma_0$ and $A \in \mathcal{B}(\Gamma_0)$ we define a transition kernel 
\[
 Q(\eta, A) := \sum \limits_{\xi \subset \eta} \int \limits_{\Gamma_0}\mathbbm{1}_A(\eta \backslash \xi \cup \zeta)K(\xi, \eta, \zeta)d\lambda(\zeta)
\]
describing the infinitesimal rate from state $\eta$ to the set $A$.
The total transition rate is therefore given by
\[
 q(\eta):= Q(\eta, \Gamma_0) = \sum \limits_{\xi \subset \eta}\int \limits_{\Gamma_0}K(\xi,\eta, \zeta)d\lambda(\eta).
\]
Note that (K) implies that $q(\eta)$ is finite for each $\eta \in \Gamma_0$. 
The action of the operator $A$ can be then rewritten to 
\begin{align}\label{EQ:05}
 (AF)(\eta) = -q(\eta)F(\eta) + \int \limits_{\Gamma_0}F(\xi)Q(\eta,d\xi) = \int \limits_{\Gamma_0}(F(\xi) - F(\eta))Q(\eta,d\xi).
\end{align}
A construction and some properties of the corresponding minimal (sub-)Markov transition function 
$P: \mathbbm{R}_+ \times \Gamma_0 \times \mathcal{B}(\Gamma_0) \longrightarrow [0,1]$ was studied in \cite{FELLER40, MU-FA04, FMS14}.
Based on the theory of Lyapunov functions, the corresponding transition semigroup, in particular the Feller property, was recently studied in \cite{FRIESEN15}.
Below we provide a construction of the (sub-)Markov transition function and the associated semigroup based on the theory of sub-stochastic semigroups.

Denote by $\mathcal{M}(\Gamma_0)$ the Banach space of signed Borel measures with finite total variation
\[
 \Vert \nu \Vert = |\nu|(\Gamma_0) = \nu_+(\Gamma_0) + \nu_-(\Gamma_0), \ \ \nu \in \mathcal{M}(\Gamma_0),
\]
where $\nu_+,\nu_-$ denote the Hahn-Jordan decomposition of $\nu$ and $|\nu| := \nu_+ + \nu_-$. The (formally) adjoint operator to $A$ should act on $\mathcal{M}(\Gamma_0)$. Using the representation \eqref{EQ:05}, it is clear that it should be given by
\[
 (\mathcal{A}\nu)(C) = - \int \limits_{C}q(\eta)\nu(d\eta) + \int \limits_{\Gamma_0}Q(\eta,C)\nu(d\eta), \ \ C \in \mathcal{B}(\Gamma_0)
\]
equipped with the domain
\[
 D(\mathcal{A}) = \left\{ \nu \in \mathcal{M}(\Gamma_0)\ \bigg| \ \int \limits_{\Gamma_0}q(\eta)|\nu|(d \eta) < \infty \right\}.
\]
A strongly continuous semigroup $S(t)$ on $\mathcal{M}(\Gamma_0)$ is called sub-stochastic,
if $S(t)\nu \geq 0$ and $\| S(t)\nu \| \leq \| \nu \|$ whenever $0 \leq \nu \in \mathcal{M}(\Gamma_0)$.
Then we obtain the following.
\begin{Theorem}
 Suppose that (K) is satisfied. 
 \begin{enumerate}
  \item[(a)] The operator $(\mathcal{A}, D(\mathcal{A}))$ is well-defined and has an extension $(\mathcal{G}, D(\mathcal{G}))$ on $\mathcal{M}(\Gamma_0)$ which 
  is the generator of a sub-stochastic semigroup $S(t)$.
  Moreover, this semigroup is minimal in the following sense: Let $(\widetilde{S}(t))_{t \geq 0}$ be another sub-stochastic semigroup on $\mathcal{M}(\Gamma_0)$ 
  with generator being an extension of $(\mathcal{G}, \mathcal{D}(\mathcal{G}))$. 
  Then $\widetilde{S}(t)\nu \leq S(t)\nu$
  for all $0 \leq \nu \in \mathcal{M}(\Gamma_0)$ and $t \geq 0$.
  \item[(b)] There exists a (sub-)Markovian transition function $P$ such that
  \begin{align}\label{BKE}
   S(t)\nu(C) = \int \limits_{\Gamma_0}P(t,\eta,C)\nu(d\eta),\ \ t \geq 0, \ \ C \in \mathcal{B}(\Gamma_0).
  \end{align}
  \item[(c)] For each $\nu \in \mathcal{M}(\Gamma_0)$ and $F \in BM(\Gamma_0)$ the duality
  \[
   \int \limits_{\Gamma_0}S(t)F(\eta) \nu(d\eta) = \int \limits_{\Gamma_0}F(\eta) (S(t)\nu)(d\eta), \ \ t \geq 0
  \]
  holds, where $S(t)^*F$ is given by 
  \begin{align}\label{BKE:1}
   S(t)F(\eta) = \int \limits_{\Gamma_0}F(\xi)P(t,\eta,d\xi),\ \ t \geq 0
  \end{align}
  \item[(d)] $S(t)$ leaves the space $L^1(\Gamma_0,\lambda) \subset \mathcal{M}(\Gamma_0)$ invariant.
  Its restriction to $L^1(\Gamma_0,\lambda)$ is again a strongly continuous semigroup.
 \end{enumerate}
\end{Theorem}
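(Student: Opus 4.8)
The plan is to recognize $\mathcal{A}$ as a Kato-type additive perturbation of a multiplication operator and to invoke the perturbation theory for substochastic semigroups on $AL$-spaces. Write $\mathcal{A} = \mathcal{A}_0 + B$, where $(\mathcal{A}_0\nu)(C) = -\int_C q(\eta)\,\nu(d\eta)$ and $(B\nu)(C) = \int_{\Gamma_0}Q(\eta,C)\,\nu(d\eta)$. The space $\mathcal{M}(\Gamma_0)$ with the total variation norm is an $AL$-space whose norm is additive on the positive cone, and $\phi(\nu):=\nu(\Gamma_0)$ is the associated mass functional. Since $q \ge 0$, the multiplication operator $\mathcal{A}_0$ generates the substochastic semigroup $T_0(t)\nu = e^{-tq}\nu$; the operator $B$ is positive and $D(\mathcal{A}_0) = D(\mathcal{A}) \subseteq D(B)$. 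The decisive balance identity, for $0 \le \nu \in D(\mathcal{A}_0)$, is
\[
 (\mathcal{A}_0\nu)(\Gamma_0) + (B\nu)(\Gamma_0) = -\int_{\Gamma_0} q\,d\nu + \int_{\Gamma_0}Q(\eta,\Gamma_0)\,\nu(d\eta) = 0,
\]
because $Q(\eta,\Gamma_0) = q(\eta)$; in particular $\phi(\mathcal{A}\nu) \le 0$ on the positive cone.

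Next I would apply the Kato--Voigt perturbation theorem (in the form developed by Kato, Voigt, Thieme and Banasiak--Arlotti): under exactly these hypotheses the resolvent series
\[
 R(\lambda)\nu = \sum_{n=0}^{\infty} (\lambda - \mathcal{A}_0)^{-1}\bigl(B(\lambda - \mathcal{A}_0)^{-1}\bigr)^n \nu, \qquad \lambda > 0,
\]
converges strongly on the positive cone and is the resolvent of a generator $(\mathcal{G}, D(\mathcal{G}))$ extending $(\mathcal{A}, D(\mathcal{A}))$ of a substochastic semigroup $S(t)$. Minimality is built into the construction: $S(t)$ arises as the strong monotone limit as $r \uparrow 1$ of the semigroups generated by the bounded perturbations $\mathcal{A}_0 + rB$, and any substochastic semigroup whose generator extends $\mathcal{G}$ dominates every approximant, hence dominates $S(t)$ on the positive cone. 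This proves (a). I expect this to be the main obstacle: one must confirm that $\mathcal{A}$ is exactly the minimal operator $\mathcal{A}_0 + B$ on $D(\mathcal{A}_0)$, that $\mathcal{M}(\Gamma_0)$ carries the required $AL$-structure, and that the mass balance $\phi(\mathcal{A}\nu)=0$ holds on the full positive cone of the domain. Because $B$ is unbounded, $\mathcal{A}$ itself need neither be closed nor a generator, so $\mathcal{G}$ is only recovered as the extension encoded by the resolvent series.

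For (b) I would propagate the kernel structure through the Dyson--Phillips expansion. The unperturbed semigroup is represented by $P_0(t,\eta,d\xi) = e^{-tq(\eta)}\delta_\eta(d\xi)$, and each iterate $S_{n+1}(t) = \int_0^t T_0(t-s) B S_n(s)\,ds$ again has the form of integration against a sub-Markov kernel $P_n(t,\eta,\cdot)$ measurable in $\eta$, since $B$ is integration against $Q(\eta,\cdot)$. Setting $P(t,\eta,C) = \sum_n P_n(t,\eta,C) \le 1$ yields, by monotone convergence and linearity in $\nu$, the representation \eqref{BKE}, and the semigroup law for $S(t)$ transfers to the Chapman--Kolmogorov equations for $P$. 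Part (c) is then Fubini: with $S(t)F(\eta) := \int_{\Gamma_0} F(\xi)\,P(t,\eta,d\xi)$, which is bounded on $BM(\Gamma_0)$ because $P$ is sub-Markov, the duality follows by interchanging integrals in \eqref{BKE}, first for $0 \le \nu$ and $0 \le F$ by Tonelli and then in general by linearity.

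For (d) I would check that the perturbation preserves absolute continuity. If $\nu = f\,d\lambda$ with $f \ge 0$, then unfolding the sum over subsets in $Q(\eta,\cdot)$ via the integration by parts formula \eqref{IBP} and carrying out the change of variables $\eta = \xi \cup \eta'$ — the newly created points being supplied by the $\zeta$-integration against $\lambda$ — shows that $B\nu$ is again absolutely continuous with respect to $\lambda$, while $T_0(t)$ trivially maps densities to densities. Since $L^1(\Gamma_0,\lambda)$, viewed through $f \mapsto f\,d\lambda$, is a closed $AL$-subspace of $\mathcal{M}(\Gamma_0)$, every term of the Dyson--Phillips series leaves it invariant and so does the limit $S(t)$; strong continuity of the restriction is then automatic for a strongly continuous semigroup leaving a closed subspace invariant. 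The explicit computation here is routine but must be arranged so that the $\zeta$-variable alone accounts for the Radon--Nikodym density.
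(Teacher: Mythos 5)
Your proposal is correct and takes essentially the same approach as the paper: the same splitting $\mathcal{A} = \mathcal{A}_0 + B$ into the multiplication-operator part and the positive kernel operator, the same mass-balance identity $(B\nu)(\Gamma_0) = \int_{\Gamma_0} q \, d\nu$ on the positive cone, and the same appeal to the Kato--Voigt/Thieme--Voigt perturbation theorem on the AL-space $\mathcal{M}(\Gamma_0)$, which is exactly what the paper invokes via \cite[Theorem 2.1 and Section 5]{TV06} for parts (a)--(b). The only difference is cosmetic: for (b) and (d) you propagate the kernel structure and the $L^1(\Gamma_0,\lambda)$-invariance through the time-domain Dyson--Phillips iterates, whereas the paper works with the resolvent series $R(a)\sum_{n} r^n (BR(a))^n \nu$; the decisive computation --- two applications of \eqref{IBP} to show that the perturbation maps $\lambda$-densities to $\lambda$-densities --- is identical in both arguments.
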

\begin{proof}
 First observe that the multiplication operator $(-q, D(\mathcal{A}))$ 
 given by $-q\nu(C) = -\int_{C}q(\eta)\nu(d\eta)$ generates a positive analytic semigroup of contractions given by
 \[
  (e^{-tq}\nu)(C) = \int \limits_{C}e^{-tq(\eta)}\nu(d\eta), \ \ \nu \in \mathcal{M}(\Gamma_0).
 \]
 Next observe that $(B, D(\mathcal{A}))$ given by
 \[
  (B\nu)(C) = \int \limits_{\Gamma_0}Q(\eta,C)\nu(d\eta), \ \ C \in \mathcal{B}(\Gamma_0)
 \]
 is well-defined, positive and satisfies 
 \begin{align}\label{EQ:09}
  B\nu(\Gamma_0) = \int \limits_{\Gamma_0}Q(\eta, \Gamma_0)\nu(d\eta) = \int \limits_{\Gamma_0}q(\eta)\nu(d\eta), \ \ 0 \leq \nu \in \mathcal{D}(\mathcal{A}).
 \end{align}
 Hence assertion (a) is a consequence of \cite[Theorem 2.1]{TV06}.
 Assertion (b) follows from \cite[Section 5]{TV06}, 
 while property (c) can be directly deduced from \eqref{BKE} and \eqref{BKE:1}.
 It remains to prove assertion (d). For $a > 0$ and $\nu \in \mathcal{M}(\Gamma_0)$ define 
 \[
  R(a)\nu(C) = \int \limits_{C}\frac{1}{a + q(\eta)}\nu(d\eta), \ \ C \in \mathcal{B}(\Gamma_0),
 \]
 which implies that
 \[
  BR(a)\nu(C) = \int \limits_{\Gamma_0}\frac{Q(\eta,C)}{a + q(\eta)}\nu(d\eta), \ \  C \in \mathcal{B}(\Gamma_0).
 \]
 It follows from \cite{TV06} that the resolvent of $(\mathcal{G},D(\mathcal{G}))$ satisfies
 \[
  (a - \mathcal{G})^{-1}\nu = \lim \limits_{r \nearrow 1}R(a)\sum \limits_{n=0}^{\infty}r^n (BR(a))^n \nu, \ \ \nu \in \mathcal{M}(\Gamma_0),
 \]
 where the convergence is with respect to the total variation norm.
 Next observe that $L^1(\Gamma_0,\lambda)$ is closed in $\mathcal{M}(\Gamma_0)$ such that, for each $g \in L^1(\Gamma_0,\lambda)$, one has
 \[
  \| g \|_{L^1(\Gamma_0,\lambda)} =  \int \limits_{\Gamma_0}|g(\eta)|d\lambda(\eta) = \| g \lambda \|_{\mathcal{M}(\Gamma_0)}.
 \]
 Hence it suffices to show that $R(a)$ and $BR(a)$ leave $L^1(\Gamma_0, \lambda)$ invariant. 
 It is immediate that $R(a)L^1(\Gamma_0,\lambda) \subset L^1(\Gamma_0, \lambda)$. Next let $\nu = g\lambda$ with $g \in L^1(\Gamma_0,\lambda)$,
 and take $C \in \mathcal{B}(\Gamma_0)$ with $\lambda(C) = 0$. Then
 \begin{align*}
  (BR(a)\nu)(C) &= \int \limits_{\Gamma_0}Q(\eta,C) \frac{g(\eta)}{a + q(\eta)} d\lambda(\eta)
  \\ &= \int \limits_{\Gamma_0} \sum \limits_{\xi \subset \eta} \int \limits_{\Gamma_0}\mathbbm{1}_C(\eta \backslash \xi \cup \zeta)K(\xi, \eta, \zeta)d\lambda(\zeta)\frac{g(\eta)}{a + q(\eta)} d\lambda(\eta)
  \\ &= \int \limits_{\Gamma_0}\int \limits_{\Gamma_0}\int \limits_{\Gamma_0}\mathbbm{1}_C(\eta \cup \zeta) K(\xi, \eta \cup \xi, \zeta)\frac{g(\eta \cup \xi)}{a + q(\eta \cup \xi)} d\lambda(\zeta)d\lambda(\xi)d\lambda(\eta)
  \\ &= \int \limits_{\Gamma_0}\int \limits_{\Gamma_0} \mathbbm{1}_C(\eta)\sum \limits_{\zeta \subset \eta} K(\xi,\eta \cup \xi \backslash \zeta, \zeta) \frac{g(\eta \cup \xi \backslash \zeta)}{a + q(\eta \cup \xi \backslash \zeta)} d\lambda(\xi)d\lambda(\eta)
      = 0
 \end{align*}
 where we have used \eqref{IBP} twice.
\end{proof}
The semigroup $S(t)$ is called stochastic, if $S(t)\nu \geq 0$ and $\| S(t)\nu \| = \| \nu \|$ whenever $0 \leq \nu \in \mathcal{M}(\Gamma_0)$.
This is equivalent to the requirement that $P(t,\eta,\Gamma_0) = 1$ for all $t,\eta$.
It is worthwhile to mention that without any further assumptions the semigroup $S(t)$ might be not stochastic, 
i.e. $P(t,\xi, \Gamma_0) < 1$ may occur for some $t > 0$ and $\xi \in \Gamma_0$.
Sufficient conditions for $S(t)$ being stochastic can be found in \cite{MU-FA04, TV06}.
We have the following simple characterization of stochasticity. Other related results are given in \cite{ALK09}.
\begin{Corollary}\label{CHARACTERIZATIONSTOCH}
 Suppose that (K) is satisfied and let $S(t)$ be the semigroup constructed above.
 Then $S(t)$ is stochastic if and only if its generator $(\mathcal{G},D(\mathcal{G}))$ is the closure of $(\mathcal{A}, D(\mathcal{A}))$.
\end{Corollary}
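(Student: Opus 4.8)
The plan is to fix $a>0$ and exploit the explicit perturbation series for the resolvent, $(a-\mathcal{G})^{-1}=\lim_{r\nearrow 1}R(a)\sum_{n\ge 0}r^n(BR(a))^n$, from the preceding Theorem, reading off stochasticity as a mass-conservation property of the resolvent and then matching it with a density statement for $a-\mathcal{A}$. Writing $\Theta_a:=BR(a)$, I would first record the elementary mass balance $aR(a)\sigma(\Gamma_0)+\Theta_a\sigma(\Gamma_0)=\sigma(\Gamma_0)$ for $0\le\sigma\in\mathcal{M}(\Gamma_0)$, which follows from \eqref{EQ:09} together with $R(a)\sigma(\Gamma_0)=\int_{\Gamma_0}(a+q(\eta))^{-1}\sigma(d\eta)$. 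Applying this to the partial sums $\sigma_N=\sum_{n=0}^N\Theta_a^n\nu$ and telescoping yields, for $0\le\nu$,
\begin{align*}
 a(a-\mathcal{G})^{-1}\nu(\Gamma_0)=\nu(\Gamma_0)-\lim_{N\to\infty}\Theta_a^{N+1}\nu(\Gamma_0).
\end{align*}
Since $a(a-\mathcal{G})^{-1}\nu=\int_0^\infty ae^{-at}S(t)\nu\,dt$ and $\|S(t)\nu\|\le\|\nu\|$, stochasticity of $S(t)$ is equivalent to $\|a(a-\mathcal{G})^{-1}\nu\|=\|\nu\|$ for all $0\le\nu$, hence to the vanishing of the defect $\lim_N\Theta_a^N\nu(\Gamma_0)$.

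Next I would pass to the predual acting on $BM(\Gamma_0)$. The adjoint is $(\Theta_a^\ast f)(\eta)=(a+q(\eta))^{-1}\int_{\Gamma_0}f(\xi)Q(\eta,d\xi)$, and $\Theta_a^\ast\mathbbm{1}=q/(a+q)\le\mathbbm{1}$, so $(\Theta_a^\ast)^n\mathbbm{1}$ decreases to some $g_a$ with $0\le g_a\le 1$. By monotone convergence $\lim_N\Theta_a^N\nu(\Gamma_0)=\int_{\Gamma_0}g_a\,d\nu$, so $S(t)$ is stochastic if and only if $g_a\equiv 0$. Passing to the limit in $\Theta_a^\ast g_a=g_a$ and rearranging with \eqref{EQ:05} shows $Ag_a=ag_a$; the monotone construction makes $g_a$ the maximal solution of $Au=au$ lying in $[0,1]$.

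It remains to connect $g_a\equiv 0$ with $\mathcal{G}=\overline{\mathcal{A}}$. Because $\mathcal{G}$ is closed, extends $\mathcal{A}$, and $a$ lies in its resolvent set, one checks $\mathrm{Ran}(a-\overline{\mathcal{A}})=\overline{\mathrm{Ran}(a-\mathcal{A})}$, whence $\mathcal{G}=\overline{\mathcal{A}}$ iff $\mathrm{Ran}(a-\mathcal{A})$ is dense in $\mathcal{M}(\Gamma_0)$. Testing against $u\in BM(\Gamma_0)$ via $\int u\,d(\mathcal{A}\nu)=\int(Au)\,d\nu$ and using $\delta_\eta\in D(\mathcal{A})$ (valid since $q(\eta)<\infty$ by (K)), this density is equivalent to: the only $u\in BM(\Gamma_0)$ with $Au=au$ is $u\equiv 0$. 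One direction is immediate, as $g_a$ is itself a bounded solution. For the converse I would use a comparison principle: if $\|u\|\le M$ solves $Au=au$, then $u^+$ is a subsolution, i.e. $u^+\le\Theta_a^\ast u^+$; iterating the positive operator $\Theta_a^\ast$ and using $u^+\le M\mathbbm{1}$ gives $u^+\le M(\Theta_a^\ast)^n\mathbbm{1}\to Mg_a$, so $g_a\equiv 0$ forces $u^+\equiv 0$, and applied to $-u$ also $u^-\equiv 0$, whence $u\equiv 0$. Chaining the three equivalences proves the corollary.

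I expect the main obstacle to be precisely this last step, namely bridging the gap between the nonnegative defect $g_a$, which governs stochasticity, and arbitrary \emph{signed} bounded solutions of $Au=au$, which govern the density of $\mathrm{Ran}(a-\mathcal{A})$ and hence $\mathcal{G}=\overline{\mathcal{A}}$. The comparison $u^+\le Mg_a$ obtained by monotone iteration of $\Theta_a^\ast$ is the crux; the remaining points—convergence of the perturbation series at $r=1$ on positive measures, the duality $\langle u,\mathcal{A}\nu\rangle=\langle Au,\nu\rangle$, and identifying $g_a$ as a genuine solution—are routine given (K) and the preceding Theorem.
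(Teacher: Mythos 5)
Your strategy is genuinely different from the paper's: the paper disposes of the direction ``closure $\Rightarrow$ stochastic'' by a short mass-conservation computation ($\mathcal{A}\nu(\Gamma_0)=0$ on $D(\mathcal{A})$ passes to $\mathcal{G}$ by approximation, whence $\frac{d}{dt}\|S(t)\nu\|=0$), and delegates the hard converse to \cite[Corollary 3.6]{ALK09}; you are in effect trying to reprove that cited result from scratch via the resolvent series and the defect function $g_a$. Your direction ``$\mathcal{G}=\overline{\mathcal{A}}\Rightarrow$ stochastic'' is complete and correct: closure gives density of $\mathrm{Ran}(a-\mathcal{A})$, the \emph{valid} half of your duality argument then kills every bounded measurable solution of $Au=au$, in particular $g_a\equiv 0$, which you correctly showed is equivalent to stochasticity. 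The gap is in the converse, precisely at the claim that density of $\mathrm{Ran}(a-\mathcal{A})$ in $\mathcal{M}(\Gamma_0)$ is \emph{equivalent} to the absence of nontrivial $u\in BM(\Gamma_0)$ with $Au=au$. Only the implication you do not need holds. To deduce density from ``no bounded solutions'' you must take a functional $\phi\in\mathcal{M}(\Gamma_0)^*$ annihilating the range (Hahn--Banach) and represent it by a bounded Borel function; but the dual of $\mathcal{M}(\Gamma_0)$ is strictly larger than $BM(\Gamma_0)$. For example, $\phi(\nu):=$ (total mass of the atomic part of $\nu$) is linear and bounded, yet not of the form $\nu\mapsto\int u\,d\nu$ for any Borel $u$: testing on Dirac measures forces $u\equiv 1$, which contradicts $\phi=0$ on diffuse measures. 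So your chain ``stochastic $\Rightarrow g_a\equiv 0\Rightarrow$ no bounded solutions $\Rightarrow$ dense range $\Rightarrow\mathcal{G}=\overline{\mathcal{A}}$'' breaks at the third arrow, and the comparison-principle step $u^+\le Mg_a$, though correct in itself, is aimed at a statement that duality cannot convert into density.

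The good news is that the repair lies inside your own computation and makes the comparison principle unnecessary. Since $\mathcal{A}=-q+B$ and $R(a)$ is multiplication by $(a+q)^{-1}$, one has $(a-\mathcal{A})R(a)=I-\Theta_a$ on all of $\mathcal{M}(\Gamma_0)$ (note $R(a)$ maps into $D(\mathcal{A})$ because $\int q\,d|R(a)\sigma|\le\|\sigma\|$), so your telescoping identity exhibits explicit elements of the range:
\begin{align*}
 (a-\mathcal{A})\,R(a)\sum_{n=0}^{N}\Theta_a^{n}\nu \;=\; \nu-\Theta_a^{N+1}\nu \;\in\; \mathrm{Ran}(a-\mathcal{A}), \qquad 0\le\nu\in\mathcal{M}(\Gamma_0),
\end{align*}
and the distance of this element from $\nu$ is $\|\Theta_a^{N+1}\nu\|=\Theta_a^{N+1}\nu(\Gamma_0)\to\int_{\Gamma_0}g_a\,d\nu$, which vanishes when $S(t)$ is stochastic. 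Hence every positive, and therefore every, $\nu$ lies in $\overline{\mathrm{Ran}(a-\mathcal{A})}$; combined with your (correct) identity $\mathrm{Ran}(a-\overline{\mathcal{A}})=\overline{\mathrm{Ran}(a-\mathcal{A})}$ and the injectivity of $a-\mathcal{G}$, this forces $D(\mathcal{G})=D(\overline{\mathcal{A}})$ and $\mathcal{G}=\overline{\mathcal{A}}$, entirely within the state space $\mathcal{M}(\Gamma_0)$ and without any appeal to its dual. With that single substitution your argument becomes a correct, self-contained proof of the corollary (indeed of the black box the paper cites), at the cost of being considerably longer than the paper's proof.
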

\begin{proof} 
 Suppose that $(\mathcal{G}, D(\mathcal{G}))$ is the closure of $(\mathcal{A}, D(\mathcal{A}))$. By \eqref{EQ:09} we obtain $\mathcal{A}\nu(\Gamma_0) = 0$
 for $0 \leq \nu \in D(\mathcal{A}))$. This yields, by approximation,
 $\mathcal{G}\nu(\Gamma_0) = 0$ for $0 \leq \nu \in D(\mathcal{G}))$.
 Hence, for $0 \leq \nu \in D(\mathcal{G})$, we obtain
 \[
  \frac{d}{dt}\| S(t)\nu \| = \frac{d}{dt}S(t)\nu(\Gamma_0) = \mathcal{G}S(t)\nu(\Gamma_0) = 0.
 \]
 This shows that $S(t)$ is stochastic. 
 
 Conversely, suppose that $S(t)$ is stochastic. 
 Take $0 \leq \nu \in \mathcal{M}(\Gamma_0)$ and observe that by $S(t) \nu \geq 0$ and $S(t)\nu = \nu + \mathcal{G}\int_{0}^{t}S(s)\nu ds$ we have
 \[
  \nu(\Gamma_0) = \| \nu \| = \| S(t)\nu \| = S(t)\nu(\Gamma_0) = \nu(\Gamma_0) + \left( \mathcal{G}\int \limits_{0}^{t}S(s)\nu ds \right)(\Gamma_0), \qquad t > 0.
 \]
 Hence we obtain, for $0 \leq \nu \in D(\mathcal{G})$,
 \[
  0 = \left(\mathcal{G}\frac{1}{t}\int \limits_0^t S(s)\nu ds\right)(\Gamma_0) 
  = \left(\frac{1}{t}\int\limits_0^{t}S(s)\mathcal{G}\nu ds\right)(\Gamma_0)
  \longrightarrow \mathcal{G}\nu(\Gamma_0), \qquad t \to 0,
 \]
 i.e. $\mathcal{G}\nu(\Gamma_0) = 0$,
 where we have used that $\nu \longmapsto \nu(\Gamma_0)$ is continuous
 in the total variation norm.
 Using also that $\mathcal{A}\nu(\Gamma_0) = 0$ for $0 \leq \nu \in D(\mathcal{A})$,
 the assertion follows from \cite[Corollary 3.6]{ALK09}. 
\end{proof} 
The following is a particular case of \cite[Proposition 5.1]{TV06}.
\begin{Remark}
 Suppose that (K) is satisfied and assume that there exists a measurable function $V: \Gamma_0 \longrightarrow \R_+$,
 and constants $c,b > 0$ such that
 \begin{align}\label{EQ:06}
  \int \limits_{\Gamma_0}\left( V(\xi) - V(\eta) \right)Q(\eta,d\xi) \leq c(1 + V(\eta)) - \e q(\eta), \ \ \eta \in \Gamma_0.
 \end{align}
 Then $S(t)$ is stochastic and leaves
 \[
  \mathcal{M}_V(\Gamma_0) = \{ \nu \in \mathcal{M}(\Gamma_0) \ | \ \| \nu\|_V := \int \limits_{\Gamma_0}(1 + V(\eta))d\nu(\eta) < \infty \}
 \]
 invariant. Moreover, the restriction of $S(t)$ onto $\mathcal{M}_V(\Gamma_0)$ is strongly continuous.
\end{Remark}
We close this section with another sufficient condition for $S(t)$ to be stochastic due to \cite[Part I, Theorem, 2.25]{MU-FA04}.
\begin{Remark}\label{REMARK:00}
 Suppose that (K) is satisfied. Moreover, assume that 
 \begin{enumerate}
  \item[(i)] There a measurable function $V: \Gamma_0 \longrightarrow \R_+$ and a constant $c > 0$ such that 
  \[
   \int \limits_{\Gamma_0}\left( V(\xi) - V(\eta) \right)Q(\eta,d\xi) \leq cV(\eta), \ \ \eta \in \Gamma_0.
  \]
  \item[(ii)] There exists a sequence of Borel sets $(E_n)_{n \in \N} \subset \Gamma_0$ with $E_n \subset E_{n+1}$ and $\bigcup_{n \in \N} E_n = \Gamma_0$ such that
  \[
   \sup \limits_{\eta \in E_n}q(\eta) < \infty, \ \forall n \in \N, \qquad
   \lim \limits_{n \to \infty} \inf \limits_{\eta \not \in E_n}V(\eta) = \infty.
  \]
 \end{enumerate}
 Then $S(t)$ is stochastic.
\end{Remark} 

\section{The main result}

\subsection{Description of the environment}
The following is our main conditions on the enviroment:
\begin{enumerate}
 \item[(E)] There exists a Borel probability measure $\mu$ on $\Gamma$ and a positive semigroup of contractions $T^E(t)$ on $L^1(\Gamma,\mu)$, 
 which is assumed to be $L^1$-ergodic, i.e.,  for each $R \in L^1(\Gamma,\mu)$ 
 \begin{align}\label{EQ:10}
  \| T^E(t)R - P_{\mu}R \|_{L^1(\Gamma,\mu)} \longrightarrow 0, \ \ t \to \infty,
 \end{align}
 where $P_{\mu}R = \int_{\Gamma}R(\gamma)d\mu(\gamma)$ denotes the average of $R$ with respect to $\mu$.
\end{enumerate}
Such condition has the following interpretation. The environment has an equilibrium measure $\mu$ and,
if the environment is in the initial state $R d\mu$, where $R \in L^1(\Gamma, \mu)$, 
then the time evolution is given by $R_td\mu$ with $R_t = T^E(t)R$. Since, in addition, one has $T^E(t)R \longrightarrow P_{\mu}R$ in $L^1(\Gamma,\mu)$, 
the evolution of densities is ergodic on $L^1(\Gamma,\mu)$.
The following result is classical in the theory of Dirichlet forms.
It will be used to provide sufficient examples for condition (E).
\begin{Theorem}\cite[Theorem 1.4.1]{D89}
 Let $\mu$ be a Borel probability measure on $\Gamma$ and 
 let $T_2^E(t)$ be a symmetric Markov semigroup on $L^2(\Gamma,\mu)$, i.e.
 a strongly continuous semigroup with $T^E(t)1 = 1$ satisfying 
 \[
  \int \limits_{\Gamma}T_2^E(t)R \cdot H d\mu = \int \limits_{\Gamma}R \cdot T_2^E(t)H d\mu, \qquad R,H \in L^2(\Gamma, \mu),
 \]
 and $0 \leq T_2^E(t)R \leq 1$ whenever $0 \leq R \leq 1$. 
 Then $T_2^E(t)$ leaves $L^1(\Gamma,\mu) \cap L^{\infty}(\Gamma,\mu)$ 
 invariant and has, for $p \in [1,\infty)$, a unique extension $T_p^E(t)$ onto $L^p(\Gamma,\mu)$ being a positive and strongly continuous contraction semigroup. 
 These extensions satisfy
 \[
  T_p^E(t)R = T_q^E(t)R, \qquad R \in L^p(\Gamma,\mu) \cap L^q(\Gamma,\mu), \ \ 1 \leq p \leq q < \infty
 \]
 and if $\frac{1}{p} + \frac{1}{q} = 1$, then $T_p^E(t)^* = T_q^E(t)$ with
 $T_{\infty}^E(t) := T_1^E(t)^*$.
\end{Theorem}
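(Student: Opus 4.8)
The plan is to obtain contractivity on the two endpoint spaces $L^1(\Gamma,\mu)$ and $L^\infty(\Gamma,\mu)$, to interpolate to all intermediate exponents, and to extend by density; the duality statement then follows from the $L^2$-symmetry. Since $\mu$ is a probability measure we have the nesting $L^\infty(\Gamma,\mu) \subset L^2(\Gamma,\mu) \subset L^1(\Gamma,\mu)$, so that $L^1(\Gamma,\mu) \cap L^\infty(\Gamma,\mu) = L^\infty(\Gamma,\mu)$ and all the function spaces in play are comparable. First I would verify $L^\infty$-contractivity: for real $R$ with $\|R\|_\infty \le 1$ one has $0 \le \tfrac{1}{2}(1+R) \le 1$, hence $0 \le T_2^E(t)\tfrac{1}{2}(1+R) \le 1$, and using $T^E(t)1 = 1$ together with linearity this reads $-1 \le T_2^E(t)R \le 1$. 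Thus $T_2^E(t)$ maps $L^\infty(\Gamma,\mu)$ into itself contractively, which already yields the claimed invariance of $L^1(\Gamma,\mu) \cap L^\infty(\Gamma,\mu)$.

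For the $L^1$-bound I would exploit self-adjointness: for $R \in L^2(\Gamma,\mu)$ and $H \in L^\infty(\Gamma,\mu)$ with $\|H\|_\infty \le 1$,
\[
 \left| \int_\Gamma (T_2^E(t)R)\, H\, d\mu \right| = \left| \int_\Gamma R\, (T_2^E(t)H)\, d\mu \right| \le \|R\|_{L^1(\Gamma,\mu)}\, \|T_2^E(t)H\|_\infty \le \|R\|_{L^1(\Gamma,\mu)},
\]
and taking the supremum over such $H$ gives $\|T_2^E(t)R\|_{L^1(\Gamma,\mu)} \le \|R\|_{L^1(\Gamma,\mu)}$. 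Hence on the dense subspace $L^2(\Gamma,\mu)$ the operator $T_2^E(t)$ is simultaneously an $L^1$- and an $L^\infty$-contraction, and the Riesz--Thorin interpolation theorem yields $\|T_2^E(t)R\|_{L^p(\Gamma,\mu)} \le \|R\|_{L^p(\Gamma,\mu)}$ for every $p \in (1,\infty)$. Since simple functions lie in every $L^p(\Gamma,\mu)$ and are dense, $T_2^E(t)$ extends uniquely by continuity to a contraction $T_p^E(t)$ on each $L^p(\Gamma,\mu)$; the semigroup law, positivity, and the consistency relation $T_p^E(t)R = T_q^E(t)R$ on overlaps all pass to the limit from the corresponding properties of $T_2^E(t)$ on the common dense domain.

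Strong continuity of the extensions needs a separate argument, as continuity of orbit maps does not interpolate. Here I would first treat bounded $R$: from $T_2^E(t)R \to R$ in $L^2(\Gamma,\mu)$ as $t \to 0$, combined with the uniform bound $\|T_2^E(t)R\|_\infty \le \|R\|_\infty$, one obtains convergence in every $L^p(\Gamma,\mu)$ with $p < \infty$ by interpolating the $L^2$-convergence against the $L^\infty$-bound (equivalently, by dominated convergence along subsequences). The uniform boundedness of the family $\{T_p^E(t)\}_{t \ge 0}$ together with density of bounded functions then upgrades this to strong continuity on all of $L^p(\Gamma,\mu)$.

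Finally, the duality $T_p^E(t)^* = T_q^E(t)$ for conjugate exponents $\tfrac{1}{p} + \tfrac{1}{q} = 1$ follows from the $L^2$-symmetry: the identity $\int_\Gamma (T_p^E(t)R)\, H\, d\mu = \int_\Gamma R\,(T_q^E(t)H)\, d\mu$ holds for $R,H$ in the common dense domain by self-adjointness and extends to all $R \in L^p(\Gamma,\mu)$, $H \in L^q(\Gamma,\mu)$ by the established contractivity and density, while $T_\infty^E(t) := T_1^E(t)^*$ is taken as a definition. The main obstacle is the strong-continuity step: boundedness interpolates cleanly, but continuity does not, so one must argue continuity of the orbit maps directly on bounded functions and then propagate it via the uniform operator bounds; the remainder is standard extension-by-density bookkeeping.
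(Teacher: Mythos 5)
Your proposal is correct, and it is essentially the standard argument: the paper itself offers no proof of this statement (it is quoted directly from \cite{D89}), and your route --- $L^{\infty}$-contractivity from the Markov property and $T^E(t)1=1$, the $L^1$-bound by symmetry/duality, Riesz--Thorin interpolation plus density for the intermediate exponents, strong continuity by playing the $L^2$-convergence off against the uniform $L^{\infty}$-bound, and duality for conjugate exponents --- is precisely the classical proof behind that citation. The only steps left implicit (positivity of $T_2^E(t)$ on general nonnegative $L^2$-functions via truncation and scaling, and the fact that interpolation should be applied to the restriction of $T_2^E(t)$ to $L^{\infty}(\Gamma,\mu)=L^1\cap L^{\infty}$ rather than to all of $L^2(\Gamma,\mu)$) are routine and do not affect correctness.
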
 
Based on the theory of Dirichlet forms, 
equilibrium gradient diffusions on $\Gamma$ were studied in \cite{AKR98, AKR98b}.
Equilibrium Glauber dynamics were then studied in \cite{KL05}.
For both examples it was shown that grand canonical Gibbs measures are invariant measures and the corresponding symmetric Markov semigroup $T_2^E(t)$ on $L^2(\Gamma,\mu)$ was constructed. Moreover, it was shown that this semigroup
is ergodic on $L^2(\Gamma, \mu)$, i.e.
\[
 \| T^E_2(t)R - P_{\mu}R \|_{L^2(\Gamma,\mu)} \longrightarrow 0, \qquad t \to \infty, \ \ R \in L^2(\Gamma,\mu).
\]
Let $T^E(T)$ be the strongly continous semigroup on $L^1(\Gamma, \mu)$. Then
\[
 \| T^E(t)R - P_{\mu}R \|_{L^1(\Gamma,\mu)} \leq \| T^E_2(t)R - P_{\mu}R \|_{L^2(\Gamma,\mu)}
\]
for all $R \in L^1(\Gamma,\mu) \cap L^2(\Gamma,\mu)$, i.e. \eqref{EQ:10} holds 
on a dense set of functions. Since $T^E(t)$ is a contraction operator,
by approximation it also holds for all $R \in L^1(\Gamma,\mu)$.

\subsection{Description of the system}
The system is modelled by a Markov process on $\Gamma_0$
having generator similarly to the one from Section 2.
Moreover, we suppose that its rates depend, in addition, on the configuration
of the environment. More precisely, let $L^S$ be for
any bounded measurable function $F = F(\eta,\gamma)$ given by
\begin{align}\label{EQ:02}
 (L^SF)(\eta,\gamma) = \sum \limits_{\xi \subset \eta}\int \limits_{\Gamma_0}\left( F(\eta \backslash \xi \cup \zeta, \gamma) - F(\eta, \gamma)\right)K(\gamma,\xi,\eta,\zeta)d\lambda(\zeta),
\end{align}
where $K(\gamma,\xi,\eta,\zeta)$ is supposed to satisfy
\begin{enumerate}
 \item[(S)] $K: \Gamma \times \Gamma_0 \times \Gamma_0 \times \Gamma_0 \longrightarrow [0,\infty]$ is jointly Borel-measurable and satisfies
 \begin{align}\label{KE}
  \int \limits_{\Gamma}\int \limits_{\Gamma_0}K(\gamma,\xi,\eta,\zeta) d\lambda(\zeta)d\mu(\gamma) < \infty, \ \ \forall \xi, \eta \in \Gamma_0.
 \end{align}
\end{enumerate}
By \eqref{KE} one immediately shows that \eqref{EQ:02} is well-defined for any $\eta \in \Gamma_0$ and $\mu$-a.a. $\gamma \in \Gamma$.
However, since we have not assumed any growth condition on the integral in \eqref{KE}, the resulting function $L^SF$ does not need to be bounded.
In particular, the corresponding dynamics may be not conservative, see Section 2 for additional comments.
Particular examples are discussed in the last section of this work, see also \cite{FRIESEN15}.

\subsection{The main result}
As it is already explained in the introduction, we are interested in the asymptotic regime $\e \to 0$ for the densities $\rho_t^{\e}$ obtained from \eqref{EQ:01}.
However, in this generality it seems hopeless to study the Fokker-Planck equation \eqref{EQ:01} directly.
For this purpose we introduce a certain approximation $L^S_{\delta}$ and study first the corresponding limit $\e \to 0$ when $\delta > 0$ is fixed.
Afterwards we take the limit $\delta \to 0$ to deduce the desired result.
Below we briefly introduce the main objects of this work. Their properties are studied afterwards.
\begin{enumerate}
 \item[(i)] For given $\delta > 0$ we define
\[
 K_{\delta}(\gamma,\xi,\eta, \zeta) := e^{-\delta q(\gamma,\eta)}K(\gamma,\xi,\eta, \zeta), \qquad
 q(\gamma,\eta) = \sum \limits_{\xi \subset \eta}\int \limits_{\Gamma_0}K(\gamma, \xi,\eta,\zeta)d\lambda(\zeta).
\]
Then $K_{\delta}$ and $q$ are measurable, non-negative and, by \eqref{KE}, also finite for $\mu$-a.a. $\gamma$.
Denote by $L_{\delta}^S$ the operator given by \eqref{EQ:02} with $K$ replaced by $K_{\delta}$
 and define another operator on $\mathcal{L}_{\mu}$ by
\begin{align*}
 (L_{\delta}^S)^*\rho(\eta,\gamma) &= - q(\gamma,\eta)e^{-\delta q(\gamma,\eta)}\rho(\eta,\gamma) 
 \\ &\ \ \ + \sum \limits_{\xi \subset \eta}\int \limits_{\Gamma_0}\rho(\eta \backslash \xi \cup \zeta,\gamma)e^{-\delta q(\gamma, \eta \backslash \xi \cup \zeta)}K(\gamma, \zeta, \eta \backslash \xi \cup \zeta, \xi)d\lambda(\zeta).
\end{align*}
\item[(ii)] Let us fix the notation for the limiting objects when $\e \to 0$, i.e. define
\begin{align}\label{AVERAGEDKERNEL2}
 \overline{K}_{\delta}(\xi,\eta,\zeta) &:= \int \limits_{\Gamma}K_{\delta}(\gamma, \xi, \eta, \zeta)d\mu(\gamma),
  \ \ \overline{q}_{\delta}(\eta) = \sum \limits_{\xi \subset \eta}\int \limits_{\Gamma_0}\overline{K}_{\delta}(\xi, \eta,\zeta)d\lambda(\zeta)
\end{align}
and associated to $\overline{K}_{\delta}$ consider the Markov (pre-)generator
\begin{align*}
 \overline{L}_{\delta}F(\eta) &= \sum \limits_{\xi \subset \eta} \int \limits_{\Gamma_0}(F(\eta \backslash \xi \cup \zeta) - F(\eta))\overline{K}_{\delta}(\xi, \eta, \zeta)d\lambda(\zeta).
\end{align*}
Finally define another operator on $L^1(\Gamma_0,\lambda)$ by
\begin{align*}
 \overline{L}_{\delta}^*\rho(\eta) &= - \overline{q}_{\delta}(\eta)\rho(\eta)
  + \sum \limits_{\xi \subset \eta}\int \limits_{\Gamma_0}\rho(\eta \backslash \xi \cup \zeta)\overline{K}_{\delta}(\zeta, \eta \backslash \xi \cup \zeta,\xi)d\lambda(\zeta).
\end{align*}
 \item[(iii)] Finally let us describe the limiting objects when $\delta \to 0$. Define 
\begin{align}\label{AVERAGEDKERNEL1}
 \overline{K}(\xi,\eta,\zeta) &:= \int \limits_{\Gamma}K(\gamma, \xi, \eta, \zeta)d\mu(\gamma), \ \ \overline{q}(\eta) = \sum \limits_{\xi \subset \eta}\int \limits_{\Gamma_0}\overline{K}(\xi, \eta,\zeta)d\lambda(\zeta)
\end{align}
and the associated Markov (pre-)generator
\begin{align*}
 \overline{L}F(\eta) &= \sum \limits_{\xi \subset \eta}\int \limits_{\Gamma_0}(F(\eta \backslash \xi \cup \zeta) - F(\eta))\overline{K}(\xi, \eta, \zeta)d\lambda(\zeta).
\end{align*}
 Finally define another operator on $L^1(\Gamma_0,\lambda)$ by 
\begin{align*}
 \overline{L}^*\rho(\eta) &= - \overline{q}(\eta)\rho(\eta)
  + \sum \limits_{\xi \subset \eta}\int \limits_{\Gamma_0}\rho(\eta \backslash \xi \cup \zeta)\overline{K}(\zeta, \eta \backslash \xi \cup \zeta,\xi)d\lambda(\zeta).
\end{align*}
\end{enumerate}
Below we summarize the main properties of these operators.
\begin{Lemma}\label{LEMMA:01}
 Suppose that (S) and (E) are satisfied. Then
 \begin{enumerate}
  \item[(a)] For each $\delta > 0$, $L_{\delta}^S$ is bounded on $L^{\infty}(\Gamma_0 \times \Gamma, \lambda \otimes \mu)$
 and $(L_{\delta}^S)^*$ is bounded on $\mathcal{L}_{\mu}$. Moreover, for each $F \in L^{\infty}(\Gamma_0\times \Gamma,\lambda \otimes \mu)$
 and $\rho \in \mathcal{L}_{\mu}$, one has
\begin{align}\label{EQ:08}
 \int \limits_{\Gamma_0 \times \Gamma} (L_{\delta}^SF)(\eta,\gamma)\rho(\eta,\gamma) d\lambda(\eta)d\mu(\gamma)
 = \int \limits_{\Gamma_0 \times \Gamma}F(\eta,\gamma)(L_{\delta}^S)^*\rho(\eta,\gamma)d\lambda(\eta)d\mu(\gamma).
\end{align}
 \item[(b)] For each $\delta > 0$, $\overline{L}_{\delta}$ is a bounded on $L^{\infty}(\Gamma_0,\lambda)$
 and $\overline{L}_{\delta}^*$ is bounded on $L^1(\Gamma_0,\lambda)$. Moreover, for each $F \in L^{\infty}(\Gamma_0,\lambda)$
 and $\rho \in L^1(\Gamma_0,\mu)$, one has
 \[
  \int \limits_{\Gamma_0}(\overline{L}_{\delta}F)(\eta) \rho(\eta)d\lambda(\eta) 
 = \int \limits_{\Gamma_0}F(\eta) (\overline{L}_{\delta}^*\rho)(\eta)d\lambda(\eta).
 \]
 \item[(c)] The operator $\overline{L}^*$ is well-defined on the domain
 \[
  D(\overline{L}^*) = \{ \rho \in L^1(\Gamma_0,\lambda) \ | \ \overline{q}\rho \in L^1(\Gamma_0,\lambda) \},
 \]
 and for all $F \in L^{\infty}(\Gamma_0, \lambda)$ and $\rho \in D(\overline{L}^*)$ it holds that
 \[
  \int \limits_{\Gamma_0}(\overline{L}F)(\eta)\rho(\eta)d\lambda(\eta)
 = \int \limits_{\Gamma_0}F(\eta)(\overline{L}^*\rho)(\eta)d\lambda(\eta).
 \]
 Moreover, the operator $(\overline{L}^*, D(\overline{L}^*))$ has an extension $(\mathcal{G},D(\mathcal{G}))$ on $L^1(\Gamma_0,\lambda)$ which is the generator of a sub-stochastic semigroup on $L^1(\Gamma_0,\lambda)$.
 \end{enumerate}
\end{Lemma}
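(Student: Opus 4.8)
The plan is to dispatch the three parts in increasing order of difficulty: (a) by a direct estimate combined with the integration-by-parts formula \eqref{IBP}, (b) as the $\mu$-average of (a), and (c) by reducing to Theorem 2.1 applied to the averaged kernel $\overline{K}$.

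For (a) the decisive observation is the elementary bound $x e^{-\delta x}\le \frac{1}{\delta e}$ for all $x\ge 0$, so that $q(\gamma,\eta)e^{-\delta q(\gamma,\eta)}\le \frac{1}{\delta e}$ uniformly in $(\eta,\gamma)$. First I would split $L_\delta^S F$ into its loss part $-q(\gamma,\eta)e^{-\delta q(\gamma,\eta)}F$ and its gain part; the triangle inequality then gives $|(L_\delta^S F)(\eta,\gamma)|\le \frac{2}{\delta e}\|F\|_\infty$, hence boundedness on $L^\infty(\Gamma_0\times\Gamma,\lambda\otimes\mu)$. For $(L_\delta^S)^*$ the loss term is bounded in $\mathcal{L}_\mu$ by $\frac{1}{\delta e}\|\rho\|$, while the gain term, after taking absolute values and applying \eqref{IBP} in the $\eta$-variable (with $\gamma$ a passive parameter), has $\mathcal{L}_\mu$-norm bounded by $\int_{\Gamma_0\times\Gamma} q(\gamma,\eta)e^{-\delta q(\gamma,\eta)}|\rho|\,d\lambda\,d\mu\le \frac{1}{\delta e}\|\rho\|_{\mathcal{L}_\mu}$, so $(L_\delta^S)^*$ is bounded on $\mathcal{L}_\mu$. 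For the duality \eqref{EQ:08} I would split both sides into loss and gain contributions: the loss contributions coincide at once, and the gain contributions are matched by applying \eqref{IBP} twice -- once to transfer the subset-summation $\sum_{\xi\subset\eta}$ into an independent $\lambda$-integration, and once more to recombine the resulting configuration $\eta\backslash\xi$ with the $\zeta$-integration into the new outer variable -- while carefully tracking the factor $e^{-\delta q(\gamma,\cdot)}$ as the configuration $\eta\backslash\xi\cup\zeta$ is formed.

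Part (b) is the $\gamma$-averaged specialization of (a). Since $\overline{K}_\delta(\xi,\eta,\zeta)=\int_\Gamma K_\delta(\gamma,\xi,\eta,\zeta)\,d\mu(\gamma)$ and $\overline{q}_\delta(\eta)=\int_\Gamma q(\gamma,\eta)e^{-\delta q(\gamma,\eta)}\,d\mu(\gamma)\le \frac{1}{\delta e}$, the identical estimates yield boundedness of $\overline{L}_\delta$ on $L^\infty(\Gamma_0,\lambda)$ and of $\overline{L}_\delta^*$ on $L^1(\Gamma_0,\lambda)$, and the duality is the same two-fold use of \eqref{IBP} with $\gamma$ already integrated out. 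For (c) I would first verify that $\overline{K}$ satisfies condition (K): measurability is clear, and by Tonelli together with (S), $\int_{\Gamma_0}\overline{K}(\xi,\eta,\zeta)\,d\lambda(\zeta)=\int_\Gamma\int_{\Gamma_0}K(\gamma,\xi,\eta,\zeta)\,d\lambda(\zeta)\,d\mu(\gamma)<\infty$ for all $\xi,\eta$. On the domain $D(\overline{L}^*)$ the loss term $\overline{q}\rho$ lies in $L^1$ by definition, while the $L^1$-norm of the gain term is, after taking absolute values and applying \eqref{IBP}, bounded by $\int_{\Gamma_0}\overline{q}(\eta)|\rho(\eta)|\,d\lambda(\eta)=\|\overline{q}\rho\|_{L^1}<\infty$, so $\overline{L}^*$ is well-defined; the duality on $D(\overline{L}^*)$ is once more the computation from (a). Finally, observing that the $L^1$-realization of $\overline{L}^*$ is exactly the density form of the adjoint operator $\mathcal{A}$ associated via Section 2 with the kernel $\overline{K}$, Theorem 2.1(a) and (d) applied to $\overline{K}$ furnish the extension $(\mathcal{G},D(\mathcal{G}))$ generating a sub-stochastic semigroup on $L^1(\Gamma_0,\lambda)$.

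The only genuinely technical ingredient is the repeated application of \eqref{IBP} that underlies every duality identity and every $L^1$-bound on a gain term. The bookkeeping is delicate because the rate $K$ is evaluated at the shifted configuration $\eta\backslash\xi\cup\zeta$ and the truncation factor $e^{-\delta q}$ must be carried along consistently through both changes of variables. Once this change-of-variables identity is carried out for part (a), parts (b) and (c) follow by specialization and by invoking Theorem 2.1, so the bulk of the work -- and the main risk of error -- lies in that single computation.
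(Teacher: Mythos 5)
Your proposal is correct and follows essentially the same route as the paper's proof: the uniform bound $xe^{-\delta x}\le \frac{1}{e\delta}$ plus a two-fold application of \eqref{IBP} for the boundedness of $(L_\delta^S)^*$ and the duality identities, with (b) obtained by the identical computation for the averaged kernel and (c) reduced to Theorem 2.1 of Section 2 applied to $\overline{K}$. The paper is terser (it leaves the duality, part (b), and part (c) as "the same way" or "a consequence of Section 2"), but your filled-in details, including the verification that $\overline{K}$ satisfies (K), are exactly the intended argument.
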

\begin{proof}
 Let us first prove assertion \textit{(a)}. Take $F \in L^{\infty}(\Gamma_0 \times \Gamma, \lambda \otimes \mu)$, then
\begin{align*}
 |L_{\delta}^SF(\eta,\gamma)|
 &\leq 2 \| F \|_{ L^{\infty}(\Gamma_0 \times \Gamma, \lambda \otimes \mu)} \sum \limits_{\xi \subset \eta} \int \limits_{\Gamma_0}K_{\delta}(\gamma,\xi,\eta,\zeta)d\lambda(\zeta)
 \\ &=  2 \| F \|_{ L^{\infty}(\Gamma_0 \times \Gamma, \lambda \otimes \mu)} q(\gamma,\eta)e^{- \delta q(\gamma,\eta)}
 \\ &\leq  \| F \|_{ L^{\infty}(\Gamma_0 \times \Gamma, \lambda \otimes \mu)} \frac{2}{e \delta},
\end{align*}
 i.e. $L_{\delta}^S$ is bounded. For $(L_{\delta}^S)^*$ we apply twice \eqref{IBP} to deduce
\begin{align*}
 \| (L_{\delta}^S)\rho \|_{\mathcal{L}_{\mu}} &\leq \frac{1}{e\delta} \| \rho \|_{\mathcal{L}_{\mu}} + \int \limits_{\Gamma}  \int \limits_{\Gamma_0} \int \limits_{\Gamma_0}\int \limits_{\Gamma_0} |\rho(\eta \cup \zeta, \gamma)| e^{- \delta q(\gamma, \eta \cup \zeta)} K(\gamma, \zeta, \eta \cup \zeta, \xi)d\lambda(\zeta)d\lambda(\xi)d\lambda(\eta)d\mu(\gamma)
\\ &=  \frac{1}{e\delta} \| \rho \|_{\mathcal{L}_{\mu}} + \int \limits_{\Gamma} \int \limits_{\Gamma_0} \int \limits_{\Gamma_0}\sum \limits_{\zeta \subset \eta} |\rho(\eta, \gamma)| e^{- \delta q(\gamma, \eta)} K(\gamma, \zeta, \eta, \xi)d\lambda(\xi)d\lambda(\eta)d\mu(\gamma)
 \\ &=  \frac{1}{e\delta} \| \rho \|_{\mathcal{L}_{\mu}} + \int \limits_{\Gamma} \int \limits_{\Gamma_0} |\rho(\eta, \gamma)|e^{- \delta q(\gamma,\eta)} q(\gamma,\eta) d\lambda(\eta) d\mu(\gamma)
 \\ &\leq \| \rho \|_{\mathcal{L}_{\mu}} \frac{2}{e\delta}.
 \end{align*}
 Identity \eqref{EQ:08} follows by a very similar computation using \eqref{IBP}, details are left for the reader.
 Assertion \textit{(b)} can be shown in exactly the same way, while assertion \textit{(c)} is a consequence of Section 2.
\end{proof}
It is worthwhile to mention that $(\mathcal{G},D(\mathcal{G}))$ does not need to be the closure of $(\overline{L}^*,D(\overline{L}^*))$. 
A characterization and sufficient conditions for this property are given in Section 2.

In order to study the Fokker-Planck equation for the joint evolution of scaled densities, 
we have first to extend the semigroup $T^E(t)$ onto $\mathcal{L}_{\mu}$. Define 
\[
 D = \left\{ f = \sum \limits_{k=1}^{n}R_k \rho_k \ \bigg | \ n \in \mathbb{N}, R_k \in L^1(\Gamma,\mu), \rho_k \in L^1(\Gamma_0, \lambda)\right\} \subset \mathcal{L}_{\mu}.
\]
Note that $D \subset \mathcal{L}_{\mu}$ is dense, see \cite[Proposition 5.5.6]{GRAFAKOS04}.
The next lemma shows that $T^E(t)$ given by assumption (E) can be uniquely extended to an ergodic semigroup on $\mathcal{L}_{\mu}$.
\begin{Lemma}\label{LEMMA:00}
 Suppose that condition (E) is satisfied. Then there exists a unique positive contraction semigroup $\widetilde{T}^E(t)$ on $\mathcal{L}_{\mu}$ such that
 \begin{align}\label{EQ:07}
  \widetilde{T}^E(t)f = \sum \limits_{k=1}^{n}(T^E(t)R_k)\rho_k, \ \ f \in D.
 \end{align}
 Moreover, it holds that
 \[
  \Vert \widetilde{T}^E(t)f - \widetilde{P}_{\mu}f \Vert_{\mathcal{L}_{\mu}} \longrightarrow 0, \ \ t \to \infty,
 \]
 where $\widetilde{P}_{\mu}f := \int_{\Gamma}f(\cdot, \gamma)d\mu(\gamma) \in L^1(\Gamma_0,\lambda)$ denotes the averaging with respect to $\mu$.
 Let $(L^E,D(L^E))$ be the generator of $T^E(t)$, $(\widetilde{L}^E, D( \widetilde{L}^E))$ be the generator of $\widetilde{T}^E(t)$, and define
 \[
  \mathcal{D} = \left\{ f = \sum \limits_{k=1}^{n}R_k \rho_k \ \bigg | \ n \in \mathbb{N}, R_k \in D(L^E), \rho_k \in L^1(\Gamma_0, \lambda)\right\}.
 \]
 Then $\mathcal{D}$ is a core for the generator $(\widetilde{L}^E, D( \widetilde{L}^E))$ such that
 \[
  \widetilde{L}^Ef = \sum \limits_{k=1}^{n}\rho_k L^ER_k, \ \ f \in \mathcal{D}.
 \]
\end{Lemma}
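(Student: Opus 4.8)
The plan is to realise $\widetilde{T}^E(t)$ as the ampliation of $T^E(t)$ to the environment factor under the isometric identification $\mathcal{L}_{\mu} = L^1(\Gamma_0 \times \Gamma, \lambda \otimes \mu) \cong L^1(\Gamma_0,\lambda; L^1(\Gamma,\mu))$. Under this identification an element $f \in \mathcal{L}_{\mu}$ is, for $\lambda$-a.a. $\eta$, a function $f(\eta,\cdot) \in L^1(\Gamma,\mu)$, and I would define $(\widetilde{T}^E(t)f)(\eta,\cdot) := T^E(t)\big(f(\eta,\cdot)\big)$, applying the environment semigroup in the $\gamma$-variable for each fixed $\eta$. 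For $f = \sum_{k=1}^n R_k \rho_k \in D$ this reduces to $f(\eta,\cdot) = \sum_k \rho_k(\eta) R_k$, hence $(\widetilde{T}^E(t)f)(\eta,\gamma) = \sum_k \rho_k(\eta)(T^E(t)R_k)(\gamma)$, which is exactly \eqref{EQ:07}; in particular the right-hand side of \eqref{EQ:07} does not depend on the chosen representation of $f$. Positivity is inherited from $T^E(t)$, and the contraction property follows from Tonelli together with $\| T^E(t) \| \leq 1$:
\[
 \| \widetilde{T}^E(t)f \|_{\mathcal{L}_{\mu}} = \int \limits_{\Gamma_0} \| T^E(t)f(\eta,\cdot) \|_{L^1(\Gamma,\mu)} d\lambda(\eta) \leq \int \limits_{\Gamma_0} \| f(\eta,\cdot) \|_{L^1(\Gamma,\mu)} d\lambda(\eta) = \| f \|_{\mathcal{L}_{\mu}}.
\]
Since $D$ is dense in $\mathcal{L}_{\mu}$ and a bounded operator is determined by its values on a dense set, this simultaneously gives existence and uniqueness.

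For the algebraic and continuity properties I would argue on $D$ and extend by density. The identities $\widetilde{T}^E(t+s) = \widetilde{T}^E(t)\widetilde{T}^E(s)$ and $\widetilde{T}^E(0) = I$ hold termwise on $D$ by the corresponding properties of $T^E$, and extend to $\mathcal{L}_{\mu}$ by boundedness. Strong continuity is first checked on $D$ via
\[
 \| \widetilde{T}^E(t)f - f \|_{\mathcal{L}_{\mu}} \leq \sum \limits_{k=1}^n \| T^E(t)R_k - R_k \|_{L^1(\Gamma,\mu)} \| \rho_k \|_{L^1(\Gamma_0,\lambda)} \longrightarrow 0, \ \ t \to 0^+,
\]
and then propagated to all $f \in \mathcal{L}_{\mu}$ by a standard $\e/3$-argument using the uniform bound $\| \widetilde{T}^E(t) \| \leq 1$ and density of $D$.

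For the ergodic statement I note that $\widetilde{P}_{\mu}$, regarded as $f \mapsto \int_{\Gamma} f(\cdot,\gamma)d\mu(\gamma)$ into the $\gamma$-independent functions, is itself a contraction on $\mathcal{L}_{\mu}$ (triangle inequality for the $\mu$-integral, $\mu$ being a probability measure), so $\widetilde{T}^E(t) - \widetilde{P}_{\mu}$ is bounded by $2$ uniformly in $t$. On $D$ one has $\widetilde{P}_{\mu}f = \sum_k (P_{\mu}R_k)\rho_k$, whence
\[
 \| \widetilde{T}^E(t)f - \widetilde{P}_{\mu}f \|_{\mathcal{L}_{\mu}} \leq \sum \limits_{k=1}^n \| T^E(t)R_k - P_{\mu}R_k \|_{L^1(\Gamma,\mu)} \| \rho_k \|_{L^1(\Gamma_0,\lambda)} \longrightarrow 0, \ \ t \to \infty,
\]
by assumption (E). A further $\e/3$-argument, using density of $D$ and the uniform bound on $\widetilde{T}^E(t) - \widetilde{P}_{\mu}$, yields the claim for all $f \in \mathcal{L}_{\mu}$.

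It remains to identify the generator on $\mathcal{D}$ and establish the core property. For $f = \sum_k R_k \rho_k \in \mathcal{D}$ (so $R_k \in D(L^E)$) the difference quotient $t^{-1}(\widetilde{T}^E(t)f - f) = \sum_k t^{-1}(T^E(t)R_k - R_k)\rho_k$ converges in $\mathcal{L}_{\mu}$ to $\sum_k (L^E R_k)\rho_k$ as $t \to 0^+$, since $t^{-1}(T^E(t)R_k - R_k) \to L^E R_k$ in $L^1(\Gamma,\mu)$; hence $f \in D(\widetilde{L}^E)$ with $\widetilde{L}^E f = \sum_k \rho_k L^E R_k$. To conclude that $\mathcal{D}$ is a core I would invoke the standard criterion that a dense, $\widetilde{T}^E(t)$-invariant subspace of the domain of a generator is automatically a core. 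Density of $\mathcal{D}$ follows from that of $D$: approximating each coefficient $R_k$ of an $f \in D$ by elements of $D(L^E)$, which is dense because $L^E$ generates a $C_0$-semigroup, produces elements of $\mathcal{D}$ converging to $f$. Invariance holds because $D(L^E)$ is invariant under $T^E(t)$, so $\widetilde{T}^E(t)\big(\sum_k R_k \rho_k\big) = \sum_k (T^E(t)R_k)\rho_k \in \mathcal{D}$. The one genuinely technical point is the first step, namely the joint measurability and representation-independence of the pointwise definition of $\widetilde{T}^E(t)f$, which is exactly what the Bochner-space identification of $\mathcal{L}_{\mu}$ is designed to supply; everything else reduces to density together with uniform boundedness, and the core property to semigroup-invariance of $\mathcal{D}$.
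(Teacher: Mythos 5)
Your proposal is correct and shares the paper's overall skeleton — define $\widetilde{T}^E(t)$ on the dense subspace $D$ by \eqref{EQ:07}, extend by boundedness, obtain ergodicity from the termwise estimate plus density, and get the core property from density and $\widetilde{T}^E(t)$-invariance of $\mathcal{D}$ — but you handle the key technical step, the existence of a contractive extension, by a genuinely different (and more elementary) route. The paper identifies $\mathcal{L}_{\mu} \cong L^1(\Gamma \to L^1(\Gamma_0,\lambda),\mu)$, i.e.\ vector-valued functions of the environment variable $\gamma$, and then invokes \cite[Proposition 5.5.10]{GRAFAKOS04}, which uses positivity of $T^E(t)$ to produce a Banach-space-valued extension with the same norm. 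You instead use the transposed identification $\mathcal{L}_{\mu} \cong L^1(\Gamma_0,\lambda; L^1(\Gamma,\mu))$ and define the extension fiberwise, $(\widetilde{T}^E(t)f)(\eta,\cdot) = T^E(t)f(\eta,\cdot)$, so that the contraction bound is a one-line Tonelli estimate requiring only boundedness of $T^E(t)$; positivity enters solely to conclude that $\widetilde{T}^E(t)$ is itself positive, rather than being the hypothesis that makes the extension exist. This buys self-containedness (no external tensor-extension result) at the cost of having to check one measurability point, which you correctly flag: strong measurability of $\eta \longmapsto T^E(t)f(\eta,\cdot)$ does hold, since composing a strongly measurable $L^1(\Gamma,\mu)$-valued map with a bounded operator preserves strong measurability (images of approximating simple functions remain simple and converge a.e.). Your proposal also makes explicit several steps the paper compresses — the $\e/3$ arguments for strong continuity and ergodicity, the difference-quotient identification of $\widetilde{L}^E$ on $\mathcal{D}$, and the standard criterion that a dense, semigroup-invariant subspace of the domain of a generator is a core — and all of these are executed correctly.
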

\begin{proof}
 First observe that $\mathcal{L}_{\mu} := L^1(\Gamma_0 \times \Gamma, \mu \otimes \lambda) \cong L^1(\Gamma \to L^1(\Gamma_0,\lambda),\mu)$.
 Define $\widetilde{T}^E(t)$ by \eqref{EQ:07}. Since $T^E(t)$ is positive, $\widetilde{T}^E(t)$ has a bounded extension to $\mathcal{L}_{\mu}$
 with the same norm, see \cite[Proposition 5.5.10]{GRAFAKOS04}. In particular $\widetilde{T}^E(t)$ is a contraction operator.
 Since $\widetilde{T}^E(t)$ is strongly continuous on $D$, it follows that it is also strongly continuous on all $\mathcal{L}_{\mu}$.
 Let us prove the ergodicity. Take $f \in D$, then
\[
 \Vert \widetilde{T}^E(t)f - \widetilde{P}_{\mu}f \Vert_{\mathcal{L}_{\mu}} \leq \sum \limits_{k=1}^{n}\Vert T^E(t)R_k - P_{\mu}R_k \Vert_{L^1(\Gamma,\mu)}\Vert \rho_k\Vert_{L^1(\Gamma_0,\lambda)} \longrightarrow 0, \ \ t \to \infty.
\]
 Since $\widetilde{T}^E(t)$ is a semigroup of contractions and $D$ dense in $\mathcal{L}_{\mu}$, the assertion is proved.
 For the last assertion observe that $\mathcal{D}$ is dense in $\mathcal{L}_{\mu}$.
 Moreover, by \eqref{EQ:07} it follows that $\mathcal{D}$ is also invariant for $\widetilde{T}^E(t)$ and hence it is a core.
\end{proof}
The following is our main result of this work.
\begin{Theorem}\label{AVERAGINGTHEOREM}
 Assume that conditions (E) and (S) are satisfied. Then
 \begin{enumerate}
  \item[(a)] For any $\e > 0$ and $\delta > 0$, the operator $(L_{\delta}^S)^* + \frac{1}{\e}\widetilde{L}^E$ equipped with the domain $\mathcal{D}$
 is closable and its closure is the generator of a stochastic semigroup $T^{\e,\delta}(t)$ on $\mathcal{L}_{\mu}$.
 \item[(b)] For any $\delta > 0$ and any $\rho \in L^1(\Gamma_0,\lambda) \subset \mathcal{L}_{\mu}$ one has
 \begin{align}\label{AVERAGING}
   \lim \limits_{\e \to 0} \sup\limits_{t \in [0,T]} \| T^{\e,\delta}(t)\rho - e^{t \overline{L}_{\delta}^*}\rho\|_{\mathcal{L}_{\mu}} = 0, \ \ \forall T > 0.
 \end{align}
 \item[(c)] Suppose that $(\overline{L}^*, D(\overline{L}^*))$ is closable and its closure generates a stochastic semigroup $\overline{T}(t)$ on $L^1(\Gamma_0,\lambda)$. Then
 \begin{align}\label{DELTA}
  \lim \limits_{\delta \to 0}\sup \limits_{t \in [0,T]}\| e^{t \overline{L}_{\delta}^*}\rho - \overline{T}(t)\rho \|_{L^1(\Gamma_0,\lambda)} = 0, \ \ \forall T > 0, 
 \ \ \rho \in L^1(\Gamma_0,\lambda).
 \end{align}
 \end{enumerate}
\end{Theorem}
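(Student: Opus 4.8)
The plan is to treat the three assertions separately, since they are of genuinely different character: (a) is a perturbation/generation statement, (b) is a singular averaging limit, and (c) is a regular approximation. Before anything else I would record two algebraic consequences of (E) that drive the whole argument. Writing $B:=\widetilde{L}^E$ and $\widetilde{P}:=\widetilde{P}_{\mu}$, the semigroup law $\widetilde{T}^E(t+s)=\widetilde{T}^E(t)\widetilde{T}^E(s)$ together with strong convergence $\widetilde{T}^E(s)\to\widetilde{P}$ (Lemma~\ref{LEMMA:00}) forces, on letting $s\to\infty$, both $\widetilde{T}^E(t)\widetilde{P}=\widetilde{P}$ and (after applying $\widetilde{P}$) $\widetilde{P}\,\widetilde{T}^E(t)=\widetilde{P}$. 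Hence $B\widetilde{P}=\widetilde{P}B=0$, the range of the contraction projection $\widetilde{P}$ is exactly $L^1(\Gamma_0,\lambda)\subset\mathcal{L}_{\mu}$ (embedded isometrically as $\rho\mapsto\rho\otimes1$), the subspace $\ker\widetilde{P}$ is $\widetilde{T}^E$-invariant, and in particular $\int_{\Gamma_0\times\Gamma}B\sigma\,d\lambda\,d\mu=0$. These facts, which I would prove as a short preliminary lemma, are what let the ergodic environment collapse onto its average.

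For assertion (a), note that $\tfrac1\e B$ generates the contraction semigroup $\widetilde{T}^E(\cdot/\e)$ while $(L_{\delta}^S)^*$ is bounded by Lemma~\ref{LEMMA:01}(a); the bounded perturbation theorem then shows that $(L_{\delta}^S)^*+\tfrac1\e B$ on $D(B)$ generates a $C_0$-semigroup $T^{\e,\delta}(t)$, and since $\mathcal{D}$ is a core for $B$ (Lemma~\ref{LEMMA:00}) and the perturbation is bounded, $\mathcal{D}$ is a core for the sum, giving closability with the asserted closure. Writing $(L_{\delta}^S)^*=-M+B_+$ with $M$ the multiplication by $q(\gamma,\eta)e^{-\delta q(\gamma,\eta)}\in[0,\tfrac1{e\delta}]$ and $B_+\ge0$, the operator $(L_{\delta}^S)^*+\tfrac1{e\delta}$ is a \emph{positive} bounded perturbation of the positive generator $\tfrac1\e B$, so the Dyson series shows $T^{\e,\delta}(t)\ge0$. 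Finally $\int (L_{\delta}^S)^*\sigma\,d\lambda\,d\mu=\langle\sigma,L_{\delta}^S 1\rangle=0$ by \eqref{EQ:08} (as $L_{\delta}^S 1=0$) and $\int B\sigma\,d\lambda\,d\mu=0$, so $\tfrac{d}{dt}\|T^{\e,\delta}(t)\rho\|=\int (L_{\delta}^S{}^*+\tfrac1\e B)T^{\e,\delta}(t)\rho=0$ for $0\le\rho$, i.e.\ $T^{\e,\delta}$ is stochastic.

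Assertion (b) is the heart, and I would prove it by resolvent convergence plus the Trotter--Kato theorem, the averaging being produced purely by ergodicity. The identity that makes everything fit is $\widetilde{P}(L_{\delta}^S)^*\rho=\overline{L}_{\delta}^*\rho$ for $\rho\in L^1(\Gamma_0,\lambda)$, a direct computation from $\overline{q}_{\delta}=\int q\,e^{-\delta q}d\mu$ and $\overline{K}_{\delta}=\int K_{\delta}\,d\mu$. Fixing large $\lambda$ and $u_{\e}:=(\lambda-G_{\e})^{-1}\rho$ with $G_\e:=(L_\delta^S)^*+\tfrac1\e B$, I split $u_{\e}=v_{\e}+w_{\e}$, $v_{\e}=\widetilde{P}u_{\e}$, $w_{\e}=(I-\widetilde{P})u_{\e}$; applying $\widetilde{P}$ and $I-\widetilde{P}$ to $(\lambda-G_{\e})u_{\e}=\rho$ and using $B\widetilde{P}=\widetilde{P}B=0$ gives $(\lambda-\overline{L}_{\delta}^*)v_{\e}=\widetilde{P}(L_{\delta}^S)^*w_{\e}+\rho$ and $(\lambda-\tfrac1\e B)w_{\e}=(I-\widetilde{P})(L_{\delta}^S)^*(v_{\e}+w_{\e})$. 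Writing $R_{\e}:=(\lambda-\tfrac1\e B)^{-1}$, the crucial ergodic input is that $R_{\e}h\to0$ for each $h\in\ker\widetilde{P}$, which follows by dominated convergence from $R_{\e}h=\int_0^{\infty}e^{-\lambda t}\widetilde{T}^E(t/\e)h\,dt$ and $\widetilde{T}^E(t/\e)h\to\widetilde{P}h=0$. Setting $v:=(\lambda-\overline{L}_{\delta}^*)^{-1}\rho$, the first equation yields $v_{\e}-v=(\lambda-\overline{L}_{\delta}^*)^{-1}\widetilde{P}(L_{\delta}^S)^*w_{\e}$, hence $\|v_{\e}-v\|\le C\|w_{\e}\|$; inserting this into $w_{\e}=R_{\e}(I-\widetilde{P})(L_{\delta}^S)^*v+R_{\e}(I-\widetilde{P})(L_{\delta}^S)^*\big((v_{\e}-v)+w_{\e}\big)$ and choosing $\lambda$ so large that $\|(L_{\delta}^S)^*\|(1+C)/\lambda<1$ absorbs the last bracket, leaving $\|w_{\e}\|\le c\,\|R_{\e}(I-\widetilde{P})(L_{\delta}^S)^*v\|$. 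The key point is that $(I-\widetilde{P})(L_{\delta}^S)^*v$ is now a \emph{fixed} element of $\ker\widetilde{P}$ independent of $\e$, so $R_{\e}$ sends it to $0$; thus $w_{\e}\to0$, $v_{\e}\to v$, and $(\lambda-G_{\e})^{-1}\rho\to(\lambda-\overline{L}_{\delta}^*)^{-1}\rho$ in $\mathcal{L}_{\mu}$. As all the semigroups are stochastic contractions (uniform boundedness) and $L^1(\Gamma_0,\lambda)$ is invariant for the limit, Trotter--Kato upgrades this to \eqref{AVERAGING}, uniformly on $[0,T]$.

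For assertion (c) the limit is regular: as $\delta\downarrow0$ one has $e^{-\delta q(\gamma,\eta)}\uparrow1$, so $\overline{q}_{\delta}\uparrow\overline{q}$ and $\overline{K}_{\delta}\uparrow\overline{K}$ by monotone convergence, whence $\overline{L}_{\delta}^*\rho\to\overline{L}^*\rho$ in $L^1(\Gamma_0,\lambda)$ for every $\rho\in D(\overline{L}^*)$ — the loss term by dominated convergence (dominated by $\overline{q}\,|\rho|\in L^1$), and the gain term by the same bound after the integration by parts \eqref{IBP}, its total mass being $\int(\overline{q}-\overline{q}_{\delta})|\rho|\,d\lambda\to0$. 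Since by hypothesis the closure $\mathcal{G}$ of $(\overline{L}^*,D(\overline{L}^*))$ generates the stochastic semigroup $\overline{T}(t)$, the set $D(\overline{L}^*)$ is automatically a core for $\mathcal{G}$, and together with the uniform contractivity of the stochastic semigroups $e^{t\overline{L}_{\delta}^*}$ the Trotter--Kato approximation theorem yields \eqref{DELTA}. The main obstacle throughout is the singular limit in (b): the absence of a spectral gap for $B$ means one cannot solve a corrector equation $Bh=(I-\widetilde{P})(L_{\delta}^S)^*v$ directly, and the non-reflexivity of $L^1$ blocks the usual weak-compactness route; both difficulties are bypassed by the algebraic reduction to the $\e$-independent datum $(I-\widetilde{P})(L_{\delta}^S)^*v$, after which mere strong ergodicity ($R_{\e}\to0$ on $\ker\widetilde{P}$) suffices and no compactness is ever needed.
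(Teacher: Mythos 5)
Your proposal is correct in substance, but it takes a genuinely different route from the paper in part (b), which is the core of the theorem. The paper does not re-derive the averaging limit at all: it verifies the two hypotheses of Kurtz's perturbation theorem \cite[Theorem 2.1]{KURTZ73} --- namely that the Abel limit $\lim_{\lambda\to 0}\lambda\int_0^{\infty}e^{-\lambda t}\widetilde{T}^E(t)\rho\,dt=\widetilde{P}_{\mu}\rho$ exists by ergodicity plus dominated convergence, and that $C\rho:=\widetilde{P}_{\mu}(L_{\delta}^S)^*\rho=\overline{L}_{\delta}^*\rho$ is bounded on $\mathrm{Ran}(\widetilde{P}_{\mu})\cong L^1(\Gamma_0,\lambda)$ --- and then quotes the conclusion. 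You prove the same identification of $\overline{L}_{\delta}^*$, but then establish the resolvent convergence $(\lambda-G_{\e})^{-1}\rho\to(\lambda-\overline{L}_{\delta}^*)^{-1}\widetilde{P}_{\mu}\rho$ by hand, via the splitting into $\widetilde{P}_{\mu}$- and $(I-\widetilde{P}_{\mu})$-components, the algebraic identities $\widetilde{L}^E\widetilde{P}_{\mu}=\widetilde{P}_{\mu}\widetilde{L}^E=0$, and the single ergodic input $(\lambda-\tfrac{1}{\e}\widetilde{L}^E)^{-1}h\to 0$ on $\ker\widetilde{P}_{\mu}$; this is in effect a self-contained proof of Kurtz's theorem in the present setting. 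The paper's route buys brevity; yours buys transparency about exactly where ergodicity enters and removes the dependence on the cited result. In part (a) you in fact prove more than the paper's own proof: the paper only argues sub-stochasticity via the Trotter product formula (even though the statement claims stochasticity), while your Dyson-series positivity argument together with the mass identities $\int(L_{\delta}^S)^*\sigma\,d\lambda\,d\mu=0$ and $\int\widetilde{L}^E\sigma\,d\lambda\,d\mu=0$ (the latter resting on your preliminary observation $\widetilde{P}_{\mu}\widetilde{T}^E(t)=\widetilde{P}_{\mu}$, which is a correct consequence of ergodicity and the semigroup law) yields genuine conservation of mass. Part (c) coincides with the paper's argument.

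One step in (b) needs repair, though it is a matter of citing the right theorem rather than a mathematical flaw. The classical Trotter--Kato theorem presupposes that the limiting resolvents are resolvents of a densely defined generator on the \emph{whole} space; here the limit $(\lambda-\overline{L}_{\delta}^*)^{-1}\widetilde{P}_{\mu}$ is a degenerate pseudo-resolvent on $\mathcal{L}_{\mu}$ (its kernel is $\ker\widetilde{P}_{\mu}$), and the limit semigroup lives only on the proper closed subspace $L^1(\Gamma_0,\lambda)$, which is \emph{not} invariant under $T^{\e,\delta}(t)$. So the standard theorem does not apply verbatim, and one must invoke the extended approximation theorem for degenerate limits --- precisely Kurtz's extension of Trotter's theorem, which is what underlies \cite[Theorem 2.1]{KURTZ73}. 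Your resolvent computation does supply its hypothesis: for $\rho\in L^1(\Gamma_0,\lambda)$ set $f_{\e}:=(\lambda-G_{\e})^{-1}(\lambda-\overline{L}_{\delta}^*)\rho$; then $f_{\e}\to\rho$ and $G_{\e}f_{\e}=\lambda f_{\e}-(\lambda-\overline{L}_{\delta}^*)\rho\to\overline{L}_{\delta}^*\rho$, which is exactly the graph-convergence condition of that theorem, and \eqref{AVERAGING} follows. With this substitution your argument is complete.
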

\begin{proof}
 \textit{(a)} First observe that $\frac{1}{\e}\widetilde{L}^E$ is, for any $\e > 0$, the generator of the semigroup $\widetilde{T}^E(\frac{t}{\e})$ on $\mathcal{L}_{\mu}$. 
 Moreover, $\mathcal{D}$ is a core for this generator.
 Since $(L_{\delta}^S)^*$ is bounded on $\mathcal{L}_{\mu}$, the sum $(L_{\delta}^S)^* + \frac{1}{\e}\widetilde{L}^E$ is defined on $\mathcal{D}$,
 it is closable and the closure generates a semigroup $T_{\e,\delta}(t)$ on $\mathcal{L}_{\mu}$.
 Due to the Trotter product formula this semigroup is sub-stochastic.

 \textit{(b)} The assertion is proved if we can show that \cite[Theorem 2.1]{KURTZ73} is applicable. 
 Therefore observe that for $\rho \in \mathcal{L}_{\mu}$ and $\lambda > 0$
 \[
  \left \Vert \lambda \int \limits_{0}^{\infty}e^{-\lambda t}\widetilde{T}^E(t)\rho dt - \widetilde{P}_{\mu}\rho \right\Vert_{\mathcal{L}_{\mu}} \leq \int \limits_{0}^{\infty}e^{-s}\left \Vert \widetilde{T}^E\left(\frac{s}{\lambda}\right)\rho - \widetilde{P}_{\mu}\rho \right \Vert_{\mathcal{L}_{\mu}}ds.
 \]
 Since $\widetilde{T}^E(t)$ is ergodic on $\mathcal{L}_{\mu}$ it follows that, for fixed $s \geq 0$, the integrand tends to zero as $\lambda \to 0$. 
 Due to $\| \widetilde{P}_{\mu}\rho \|_{\mathcal{L}_{\mu}} \leq \Vert \rho \Vert_{\mathcal{L}_{\mu}}$
 and $\| \widetilde{T}^E(t) \rho \|_{\mathcal{L}_{\mu}} \leq \| \rho \|_{\mathcal{L}_{\mu}}$ the integrand is bounded by $2\Vert \rho \Vert_{\mathcal{L}_{\mu}}e^{-s}$ and hence dominated convergence implies 
 \[
  \widetilde{P}_{\mu}\rho = \underset{\lambda \to 0}{\lim}\ \lambda \int \limits_{0}^{\infty}e^{-\lambda t}\widetilde{T}^E(t)\rho dt, \ \ \rho \in \mathcal{L}_{\mu}.
 \]
 The operator $\widetilde{P}_{\mu}$ is a projection on $\mathcal{L}_{\mu}$ with range $\mathrm{Ran}(\widetilde{P}_{\mu}) \cong L^1(\Gamma_0, \lambda)$.
 Following the notion of \cite{KURTZ73}, observe that $C\rho := \widetilde{P}_{\mu}(L_{\delta}^S)^*\rho = \overline{L}_{\delta}^*\rho$ is defined on $L^1(\Gamma_0,\lambda)$ and is additionally bounded. Hence \cite[Theorem 2.1]{KURTZ73} is applicable, i.e. \eqref{AVERAGING} is proved.

 \textit{(c)} For the last assertion observe that $D(\widetilde{L}^*)$ is a core for $\overline{T}(t)$. 
 By Trotter-Kato approximation it suffies to show that
 \[
  \| \overline{L}_{\delta}^* \rho - \overline{L}^* \rho \|_{L^1(\Gamma_0,\lambda)} \longrightarrow 0, \ \ \delta \to 0, \ \ \rho \in D(\overline{L}^*).
 \]
 Indeed, for any $\rho \in D(\overline{L}^*)$, we obtain
 \begin{align*}
  & \Vert \overline{L}_{\delta}^*\rho - \overline{L}^*\rho\Vert 
  \\ &\leq \int \limits_{\Gamma_0}|\rho(\eta)| |\overline{q}_{\delta}(\eta) - \overline{q}(\eta)|d\lambda(\eta)
  \\ & \ \ + \int \limits_{\Gamma_0}\sum \limits_{\xi \subset \eta}\int \limits_{\Gamma_0}|\rho(\eta \backslash \xi \cup \zeta)| |\overline{K_{\delta}}(\zeta, \eta \backslash \xi \cup \zeta, \zeta) - \overline{K}(\zeta, \eta \backslash \xi \cup \zeta, \xi)|d\lambda(\zeta)d\lambda(\eta).
 \end{align*}
 For the first term observe that by \eqref{AVERAGEDKERNEL1} and \eqref{AVERAGEDKERNEL2} we obtain
 \begin{align*}
  |\overline{q}_{\delta}(\eta) - \overline{q}(\eta)| &\leq \sum \limits_{\xi \subset \eta}\int \limits_{\Gamma_0}|\overline{K}_{\delta}(\xi, \eta, \zeta) - \overline{K}(\xi, \eta, \zeta)|d \lambda(\zeta)
 \\ &\leq \sum \limits_{\xi \subset \eta} \int \limits_{\Gamma_0}\int \limits_{\Gamma}\left| 1 - e^{-\delta q(\gamma,\eta)}\right| K(\gamma, \xi,\eta,\zeta) d\mu(\gamma) d\lambda(\zeta).
 \end{align*}
 Since the integrand tends pointwise to zero and is bounded by $2 K(\gamma, \xi,\eta,\zeta)$,
 we deduce from dominated convergence
 \[
  \int \limits_{\Gamma_0}|\rho(\eta)| |\overline{q}_{\delta}(\eta) - \overline{q}(\eta)|d\lambda(\eta) \longrightarrow 0, \ \ \delta \to 0.
 \]
 The convergence 
 \[
 \int \limits_{\Gamma_0}\sum \limits_{\xi \subset \eta}\int \limits_{\Gamma_0}|\rho(\eta \backslash \xi \cup \zeta)| |\overline{K_{\delta}}(\zeta, \eta \backslash \xi  \cup \zeta, \zeta) - \overline{K}(\zeta, \eta \backslash \xi \cup \zeta, \xi)|d\lambda(\zeta)d\lambda(\eta) \longrightarrow 0, \ \ \delta \to 0
 \]
 can be shown in the same way.
\end{proof}

\section{Examples}
Consider equilibrium diffusions or Glauber birth-and-death Markov dynamics on $\Gamma$ for a given invariant (Gibbs) measure $\mu$,
more generally suppose that condition (E) is satisfied. Let us consider the spatial logistic model with heuristic Markov generator
\begin{align*}
 (L^SF)(\eta,\gamma) &= \sum \limits_{x \in \eta}\left( m(x,\gamma) + \sum \limits_{y \in \eta \backslash x}a^-(x-y)\right)(F(\eta \backslash x,\gamma) - F(\eta,\gamma))
 \\ & \ \ \ + \sum \limits_{x \in \eta} \lambda(x,\gamma)\int \limits_{\mathbb{R}^d}a^+(x-y)(F(\eta \cup y,\gamma) - F(\eta,\gamma))dy.
\end{align*}
The statistical dynamics for such model (without the presence of an environment) has been analyzed, e.g.,  in \cite{FKK08, FKK13, FKKK15, KK18}.
Here $m \geq 0$ is the intensity of the death of particles and $\lambda \geq 0$ describes fecundity effects caused by the environment in the state $\gamma$.
Finally $a^- \geq 0$ is assumed to be symmetric. It describes the competition of particles from the configuration $\eta \in \Gamma_0$.
The distribution of new particles is described by a symmetric probability density $a^+$ on $\mathbb{R}^d$.
After scaling the averaged dynamics will be given by the heuristic Markov operator
\begin{align*}
 (\overline{L}F)(\eta) &= \sum \limits_{x \in \eta}\left( \overline{m}(x) + \sum \limits_{y \in \eta \backslash x}a^-(x-y)\right)(F(\eta \backslash x) - F(\eta))
 \\ & \ \ \ + \sum \limits_{x \in \eta} \overline{\lambda}(x)\int \limits_{\mathbb{R}^d}a^+(x-y)(F(\eta \cup y) - F(\eta))dy,
\end{align*}
where the averaged intensities are given by
\begin{align*}
 \overline{m}(x) = \int \limits_{\Gamma}m(x,\gamma)d\mu(\gamma), \qquad
 \overline{\lambda}(x) = \int \limits_{\Gamma}\lambda(x,\gamma)d\mu(\gamma).
\end{align*}
Proceeding as in Section 3, denote by $T_{\e,\delta}(t)$ the scaled semigroup on densities $\mathcal{L}_{\mu}$ and by
$\overline{T}(t)$ and $e^{t \overline{L}_{\delta}^*}$ the semigroups on $L^1(\Gamma_0,\lambda)$.
The next result states conditions for which these semigroups exist and \eqref{AVERAGING} holds.
\begin{Theorem}
 Assume that all intensities $a^{\pm}, m, \lambda$ are non-negative, measurable, that $a^+$ is a probability density and that
 $m(x,\cdot), \lambda(x,\cdot)$ are integrable with respect to $\mu$ for any $x \in \mathbb{R}^d$. 
 Then the semigroups $T_{\e,\delta}(t)$, $e^{t \overline{L}_{\delta}^*}$ and $\overline{T}(t)$ exist and \eqref{AVERAGING} holds.
\end{Theorem}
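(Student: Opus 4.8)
The plan is to verify that the logistic rates satisfy condition (S), so that Lemma~\ref{LEMMA:01} and parts (a), (b) of Theorem~\ref{AVERAGINGTHEOREM} apply directly, and then to secure the limiting semigroup $\overline{T}(t)$ by proving stochasticity of the averaged dynamics with the Lyapunov tools of Section~2. First I would identify the transition kernel underlying \eqref{EQ:02}: the generator consists of a death event $\eta \mapsto \eta \backslash x$ (so $\xi = \{x\} \subset \eta$, $\zeta = \emptyset$) at rate $m(x,\gamma) + \sum_{y \in \eta \backslash x} a^-(x-y)$, and a birth event $\eta \mapsto \eta \cup y$ (so $\xi = \emptyset$, $\zeta = \{y\}$) at rate $\sum_{x \in \eta} \lambda(x,\gamma) a^+(x-y)$. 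To check (S) I fix $\xi, \eta \in \Gamma_0$ and integrate in $\zeta$ against $\lambda$ and in $\gamma$ against $\mu$. For the death part only $\zeta = \emptyset$ contributes and the $\gamma$-integral produces $\overline{m}(x) + \sum_{y \in \eta \backslash x} a^-(x-y)$, finite since $m(x,\cdot)$ is $\mu$-integrable and $\eta$ is finite; for the birth part the $\zeta$-integral uses that $a^+$ is a probability density, giving $\sum_{x \in \eta} \lambda(x,\gamma)$, whose $\mu$-integral is $\sum_{x \in \eta}\overline{\lambda}(x) < \infty$ by integrability of $\lambda(x,\cdot)$. Hence \eqref{KE} holds. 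Consequently Theorem~\ref{AVERAGINGTHEOREM}(a) yields the stochastic semigroups $T_{\e,\delta}(t)$, Lemma~\ref{LEMMA:01}(b) gives the bounded operator $\overline{L}_\delta^*$ together with $e^{t\overline{L}_\delta^*}$, and Theorem~\ref{AVERAGINGTHEOREM}(b) gives \eqref{AVERAGING}.

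It then remains to produce $\overline{T}(t)$, i.e.\ to verify the hypothesis of Theorem~\ref{AVERAGINGTHEOREM}(c). Here I observe that the averaged kernel $\overline{K}$ from \eqref{AVERAGEDKERNEL1} is itself of spatial logistic type, with death rate $\overline{m}(x) + \sum_{y} a^-(x-y)$ and birth intensity $\sum_{x}\overline{\lambda}(x)a^+(x-\cdot)$, and it satisfies condition (K) by the computation just performed. The construction of Section~2 therefore furnishes a sub-stochastic semigroup on $L^1(\Gamma_0,\lambda)$ whose generator $\mathcal{G}$ extends $(\overline{L}^*, D(\overline{L}^*))$. By Corollary~\ref{CHARACTERIZATIONSTOCH}, this semigroup is stochastic exactly when $\mathcal{G}$ is the closure of $\overline{L}^*$; so a single proof of stochasticity simultaneously establishes closability and identifies the closure, which is precisely what part (c) demands. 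I would obtain stochasticity from Remark~\ref{REMARK:00} using the Lyapunov function $V(\eta) = |\eta|$: the death events lower $V$ and contribute the non-positive term $-\sum_{x}\big(\overline{m}(x) + \sum_{y} a^-(x-y)\big)$, while the birth events raise $V$ and contribute $+\sum_{x}\overline{\lambda}(x)$ (again via $\int a^+ = 1$), and one exhausts $\Gamma_0$ by sets on which the total rate $\overline{q}$ stays bounded.

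The main obstacle is this last step: bounding the birth drift $\sum_{x}\overline{\lambda}(x)$ by $c\,V(\eta) = c|\eta|$ uniformly in $\eta$, together with arranging the exhausting sets $E_n$ of condition (ii) in Remark~\ref{REMARK:00}. When $\overline{\lambda}$ is bounded both requirements are immediate; in general the control must be extracted from the competition term $a^-$, whose cumulative death rate grows superlinearly in dense configurations and thereby self-regulates the population, and the sets $E_n$ must be chosen to bound the particle number \emph{and} localise the configurations so that $\sup_{\eta \in E_n}\overline{q}(\eta) < \infty$ despite $\overline{m},\overline{\lambda}$ being merely finite-valued in $x$. Verifying this conservativeness is where the genuine work lies, and it is the one point at which the specific logistic structure, rather than the abstract framework of Section~3, is really used.
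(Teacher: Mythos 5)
Your first paragraph is correct and is essentially the paper's own proof: one computes the total rate $q(\gamma,\eta)=\sum_{x\in\eta}m(x,\gamma)+\sum_{x\in\eta}\sum_{y\in\eta\backslash x}a^-(x-y)+\sum_{x\in\eta}\lambda(x,\gamma)$, notes that each kernel integral $\int_{\Gamma_0}K(\gamma,\xi,\eta,\zeta)\,d\lambda(\zeta)$ is dominated by it (using that $a^+$ is a probability density), and deduces \eqref{KE} from the $\mu$-integrability of $m(x,\cdot)$ and $\lambda(x,\cdot)$; then Theorem~\ref{AVERAGINGTHEOREM}(a),(b) and Lemma~\ref{LEMMA:01}(b) give $T_{\e,\delta}(t)$, $e^{t\overline{L}_\delta^*}$ and \eqref{AVERAGING}.

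The gap is in your second and third paragraphs, where you read ``the semigroup $\overline{T}(t)$ exists'' as requiring the hypothesis of Theorem~\ref{AVERAGINGTHEOREM}(c) --- closability of $(\overline{L}^*,D(\overline{L}^*))$ with closure generating a \emph{stochastic} semigroup --- and then leave precisely that step open (``where the genuine work lies''). Two things go wrong. First, this is not what the statement asserts: it claims existence of $\overline{T}(t)$ but neither \eqref{DELTA} nor stochasticity. The intended object is the minimal sub-stochastic semigroup on $L^1(\Gamma_0,\lambda)$ generated by the extension $(\mathcal{G},D(\mathcal{G}))$ of $\overline{L}^*$ furnished by Lemma~\ref{LEMMA:01}(c), i.e.\ Section~2 applied to the averaged kernel $\overline{K}$, which satisfies (K) by the computation you already performed; you state this construction in passing, and that sentence --- not the stochasticity programme --- already finishes the proof. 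Second, the step you leave open cannot be closed under the present hypotheses, so your reading is unrepairable. The hoped-for self-regularisation by $a^-$ fails on sparse configurations: with $a^-\equiv 0$, $\overline{m}$ bounded and $\overline{\lambda}$ unbounded, a singleton $\eta=\{x\}$ has drift $\int_{\Gamma_0}(V(\xi)-V(\eta))\,Q(\eta,d\xi)=\overline{\lambda}(x)-\overline{m}(x)$ for $V(\eta)=|\eta|$, which admits no bound of the form $cV(\eta)=c$, and such a pure birth dynamics with spatially growing fecundity can explode, so in this generality $\overline{T}(t)$ is genuinely only sub-stochastic. This is exactly why the paper's \emph{next} theorem imposes the additional assumptions (boundedness of $\overline{m},\overline{\lambda},a^-$, or the Lyapunov condition \eqref{LYAPUNOV} coupling $\overline{\lambda}$ to $\overline{m}$) before concluding that $\overline{T}(t)$ is stochastic and \eqref{DELTA} holds. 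Deleting your last two paragraphs and replacing them by a citation of Lemma~\ref{LEMMA:01}(c) makes your argument complete and identical to the paper's.
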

\begin{proof}
 First observe that $\eta \in \Gamma_0$ and fixed $\xi \subset \eta$
 \begin{align*}
  q(\gamma,\eta) &= \sum \limits_{x \in \eta}m(x,\gamma) + \sum \limits_{x \in \eta}\sum \limits_{y \in \eta \backslash x} a^-(x-y) + \sum \limits_{x \in \eta}\lambda(x,\gamma)
  \\ &= \sum \limits_{\xi \subset \eta}\int \limits_{\Gamma_0}K(\gamma,\xi,\eta,\zeta)d\lambda(\zeta)
  \geq \int \limits_{\Gamma_0}K(\gamma,\xi,\eta,\zeta) d\lambda(\zeta).
 \end{align*}
 This implies that 
 \[
  \int \limits_{\Gamma}\int \limits_{\Gamma_0}K(\gamma,\xi,\eta, \zeta)d\lambda(\zeta)d\mu(\gamma)
  \leq \int \limits_{\Gamma}q(\gamma,\eta)d\mu(\gamma) < \infty,
 \]
 i.e. condition (S) are satisfied.
 The assertion is a direct consequence of Theorem \ref{AVERAGINGTHEOREM}.(b).
\end{proof}
The reader may wonder why such weak assumptions are sufficient for existence and convergence of the semigroups. 
The crucial point here is that we consider an approximation
by bounded linear operators and hence, for each $\delta > 0$, no additional conditions are needed. 
In order to pass to the limit $\delta \to 0$ additional assumptions are necessary, which are given below.
\begin{Theorem}
 Assume that the conditions of previous theorem are fulfilled. Moreover suppose that
 \begin{enumerate}
  \item[(i)] either $\overline{m}, \overline{\lambda},a^-$ are bounded
  \item[(ii)] or $\overline{m}, \overline{\lambda}, a^-$ are locally bounded and 
  there exists a continuous function $\varphi: \mathbb{R}^d \longrightarrow [1,\infty)$ with $\varphi(x) \longrightarrow \infty$ when $|x| \to \infty$
  and a constant $c > 0$ such that
 \begin{align}\label{LYAPUNOV}
  \overline{\lambda}(x)(a^+ \ast \varphi)(x) \leq  c \varphi(x) + \varphi(x)\overline{m}(x), \ \ x \in \mathbb{R}^d.
 \end{align}
 \end{enumerate}
 Then $\overline{T}(t)$ is stochastic and \eqref{DELTA} holds.
\end{Theorem}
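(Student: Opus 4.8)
The plan is to reduce everything to the sufficient condition for stochasticity in Remark~\ref{REMARK:00}. Once $(\overline{L}^*, D(\overline{L}^*))$ is shown to be closable with closure generating a stochastic semigroup, the convergence \eqref{DELTA} is exactly the conclusion of Theorem~\ref{AVERAGINGTHEOREM}(c), so nothing further is required. By Lemma~\ref{LEMMA:01}(c) the operator $\overline{L}^*$ already admits an extension $(\mathcal{G},D(\mathcal{G}))$ generating the minimal sub-stochastic semigroup, and by Corollary~\ref{CHARACTERIZATIONSTOCH} this semigroup is stochastic precisely when $\mathcal{G}$ is the closure of $\overline{L}^*$. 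Hence it suffices to exhibit a Lyapunov function $V$ meeting the two requirements of Remark~\ref{REMARK:00} for the averaged dynamics governed by $\overline{K}$ and $\overline{q}$. Here condition (K) for $\overline{K}$ is automatic: since (S) holds (as established in the previous theorem), Tonelli gives $\int_{\Gamma_0}\overline{K}(\xi,\eta,\zeta)\,d\lambda(\zeta) = \int_{\Gamma}\int_{\Gamma_0}K(\gamma,\xi,\eta,\zeta)\,d\lambda(\zeta)\,d\mu(\gamma) < \infty$.

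Under hypothesis (i) I would take $V(\eta) = |\eta|$. A birth raises $V$ by one and a death lowers it by one, so
\begin{align*}
 \int_{\Gamma_0}(V(\xi) - V(\eta))\,Q(\eta,d\xi) = \sum_{x\in\eta}\overline{\lambda}(x) - \sum_{x\in\eta}\Big(\overline{m}(x) + \sum_{y\in\eta\backslash x}a^-(x-y)\Big) \leq \|\overline{\lambda}\|_{\infty}\,|\eta|,
\end{align*}
which is requirement (i) of Remark~\ref{REMARK:00} with $c = \|\overline{\lambda}\|_{\infty}$. For requirement (ii) I would choose $E_n = \{\eta : |\eta|\leq n\}$; then $\inf_{\eta\notin E_n}V(\eta) = n+1 \to \infty$, while boundedness of the intensities yields $\overline{q}(\eta) \leq n(\|\overline{m}\|_{\infty} + \|\overline{\lambda}\|_{\infty}) + n(n-1)\|a^-\|_{\infty} < \infty$ for $\eta\in E_n$.

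Under hypothesis (ii) the appropriate choice is $V(\eta) = \sum_{x\in\eta}\varphi(x)$. The birth part of the drift contributes $\sum_{x\in\eta}\overline{\lambda}(x)(a^+\ast\varphi)(x)$ and the death part $-\sum_{x\in\eta}\varphi(x)\big(\overline{m}(x) + \sum_{y\in\eta\backslash x}a^-(x-y)\big)$; inserting the bound \eqref{LYAPUNOV} into the birth part and dropping the non-negative competition sum gives $\int_{\Gamma_0}(V(\xi)-V(\eta))\,Q(\eta,d\xi) \leq c\,V(\eta)$, i.e. requirement (i). For requirement (ii) I would set $E_n = \{\eta : |\eta|\leq n,\ \eta\subset B_n\}$ with $B_n$ the closed ball of radius $n$; local boundedness of $\overline{m},\overline{\lambda},a^-$ bounds $\overline{q}$ on $E_n$, and if $\eta\notin E_n$ then either $|\eta|>n$, whence $V(\eta)\geq|\eta|>n$ as $\varphi\geq 1$, or some point of $\eta$ lies outside $B_n$, whence $V(\eta)\geq\inf_{|x|>n}\varphi(x)$; thus $\inf_{\eta\notin E_n}V(\eta)\to\infty$ because $\varphi(x)\to\infty$.

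The delicate point is the drift estimate under (ii): one must observe that, after applying \eqref{LYAPUNOV}, the fecundity contribution $\overline{\lambda}(x)(a^+\ast\varphi)(x)$ is exactly compensated by the mortality term $\varphi(x)\overline{m}(x)$, leaving only $c\,V(\eta)$ together with the sign-definite competition sum that may be discarded. This engineered cancellation is the whole purpose of assumption \eqref{LYAPUNOV}: it permits intensities that are merely locally bounded, and possibly unbounded at infinity, while still forcing the averaged semigroup to be conservative.
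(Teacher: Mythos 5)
Your proposal is correct and follows essentially the same route as the paper: both cases are reduced to the Lyapunov-type criterion of Remark~\ref{REMARK:00}, with $V(\eta)=|\eta|$ and $E_n=\{|\eta|\leq n\}$ under hypothesis (i), and $V(\eta)=\sum_{x\in\eta}\varphi(x)$ with $E_n=\{|\eta|\leq n,\ \eta\subset B_n\}$ under hypothesis (ii), the bound \eqref{LYAPUNOV} giving $(\overline{L}V)(\eta)\leq cV(\eta)$. You in fact supply a few details the paper leaves implicit (the explicit drift computation in case (i), the verification of (K) for $\overline{K}$, and the passage from stochasticity to the hypothesis of Theorem~\ref{AVERAGINGTHEOREM}(c) via Corollary~\ref{CHARACTERIZATIONSTOCH}), so nothing further is needed.
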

\begin{proof}
 In the first case set $E_n = \{ \eta \in \Gamma_0 \ | \ |\eta| \leq n \}$, then $E_n \subset E_{n+1}$, $\bigcup_{n \geq 1}E_n = \Gamma_0$ and
 \[
  \overline{q}(\eta) = \sum \limits_{x \in \eta} \overline{m}(x) + \sum \limits_{x \in \eta}\overline{\lambda}(x) + \sum \limits_{x \in \eta} \sum \limits_{y \in \eta\backslash x}a^-(x-y)
 \]
 is bounded on any $E_n$. Moreover, for $V(\eta) = |\eta|$ we obtain $\inf_{\eta \not \in E_n}V(\eta) \geq n+1 \to \infty$, $n \to \infty$ and hence
 the assertion follows from Remark \ref{REMARK:00}. 

 For the second case take $E_n = \{ \eta \in \Gamma_0 \ | \ |\eta| \leq n, \ \eta \subset B_n\}$,
 where $B_n \subset \mathbbm{R}^d$ is a ball centered at zero of radius $n$. 
 Hence due to \eqref{LYAPUNOV} we see that the Lyapunov function $V(\eta) = \sum_{x \in \eta}\varphi(x)$ satisfies
 \[
  (\overline{L}V)(\eta) \leq c V(\eta), \ \ \eta \in \Gamma_0.
 \]
 The assertion follows again by Remark \ref{REMARK:00}.
\end{proof}
As a concrete case we can take $\mu = \pi_{z}$, that is the Poisson measure with intensity $z > 0$. Let us take for the interactions
\[
 m(x,\gamma) = m_0 + \sum \limits_{y \in \gamma}\kappa(x-y)
\]
and
\[
 \lambda(x,\gamma) = \lambda_0 + \sum \limits_{y \in \gamma}\psi(x-y)
\]
with $\lambda_0 > m_0$, $0 \leq \kappa,\psi \in L^1(\mathbb{R}^d)$ and $\langle \psi\rangle < \langle \kappa \rangle$. Then 
\[
 \overline{m} = m_0 + z \int \limits_{\mathbb{R}^d}\kappa(y)dy = m_0 + z \langle \kappa \rangle
\]
and 
\[
 \overline{\lambda} = \lambda_0 + z \int \limits_{\mathbb{R}^d}\psi(y)dy = \lambda_0 + \langle \psi \rangle.
\]
Define $\beta(z) = (\lambda_0 + z \langle \psi\rangle - m_0 - z \langle \kappa \rangle)$ and observe that $V(\eta) = 1 + |\eta|$ satisfies
\[
 (\overline{L}V)(\eta) \leq \beta(z)|\eta|.
\]
If $a^-$ is, in addition, bounded, then for each $0 \leq \rho \in L^1(\Gamma_0, \lambda)$ satisfying
\[
 \int \limits_{\Gamma_0}(1+|\eta|)\rho(\eta)d\lambda(\eta) < \infty, \qquad 
 \int \limits_{\Gamma_0}\rho(\eta)d\lambda(\eta) = 1
\]
we see that $\rho_t = \overline{T}(t)\rho$ satisfies
\[
  \int \limits_{\Gamma_0}|\eta|\rho_t(\eta) d\lambda(\eta) \leq e^{\beta(z)t}\int \limits_{\Gamma_0}|\eta|\rho(\eta)d\lambda(\eta), \ \ t \geq 0.
\]
Without the presence of an environment, i.e. $z = 0$, the number of particles within the system will grow exponentially in time. 
But due to the influence of the environment, such growth may be prevented
or even exponential decay may be observed. 

\subsection*{Acknowledgments}
The authors gratefully acknowledge the useful remarks of the anonymous referee
which lead to an improvement of this work.

\begin{footnotesize}

\bibliographystyle{alpha}
\bibliography{Bibliography}

\newcommand{\etalchar}[1]{$^{#1}$}
\begin{thebibliography}{BDPK{\etalchar{+}}18}

\bibitem[AKR98a]{AKR98}
Sergio Albeverio, Yuri Kondratiev, and Michael R\"ockner.
\newblock Analysis and geometry on configuration spaces.
\newblock {\em J. Funct. Anal.}, 154(2):444--500, 1998.

\bibitem[AKR98b]{AKR98b}
Sergio Albeverio, Yuri Kondratiev, and Michael R{\"o}ckner.
\newblock Analysis and geometry on configuration spaces: the {G}ibbsian case.
\newblock {\em J. Funct. Anal.}, 157(1):242--291, 1998.

\bibitem[ALMK11]{ALK09}
Luisa Arlotti, Bertrand Lods, and Mustapha Mokhtar-Kharroubi.
\newblock On perturbed substochastic semigroups in abstract state spaces.
\newblock {\em Z. Anal. Anwend.}, 30(4):457--495, 2011.

\bibitem[BDPK{\etalchar{+}}18]{BDKMO18}
Viktor Bezborodov, Luca Di~Persio, Tyll Krueger, Mykola Lebid, and Tomasz
  O\.za\'nski.
\newblock Asymptotic shape and the speed of propagation of continuous-time
  continuous-space birth processes.
\newblock {\em Adv. in Appl. Probab.}, 50(1):74--101, 2018.

\bibitem[Bez15]{BEZ151}
Viktor Bezborodov.
\newblock Spatial birth-and-death markov dynamics of finite particle systems.
\newblock {\em arXiv:1507.05804 [math.PR]}, 2015.

\bibitem[BK03]{KOL03}
Vyacheslav~Pavlovich Belavkin and Vassili Kolokoltsov.
\newblock On a general kinetic equation for many-particle systems with
  interaction, fragmentation and coagulation.
\newblock {\em R. Soc. Lond. Proc. Ser. A Math. Phys. Eng. Sci.},
  459(2031):727--748, 2003.

\bibitem[Che04]{MU-FA04}
Mu-Fa Chen.
\newblock {\em From {M}arkov chains to non-equilibrium particle systems}.
\newblock World Scientific Publishing Co., Inc., River Edge, NJ, second
  edition, 2004.

\bibitem[Dav89]{D89}
Edward~B. Davies.
\newblock {\em Heat kernels and spectral theory}.
\newblock Cambridge tracts in mathematics. Cambridge Univ. Pr., Cambridge u.a.,
  1989.

\bibitem[EW03]{EW03}
Andreas Eibeck and Wolfgang Wagner.
\newblock Stochastic interacting particle systems and nonlinear kinetic
  equations.
\newblock {\em Ann. Appl. Probab.}, 13(3):845--889, 2003.

\bibitem[Fel40]{FELLER40}
Willy Feller.
\newblock On the integro-differential equations of purely discontinuous
  {M}arkoff processes.
\newblock {\em Trans. Amer. Math. Soc.}, 48:488--515, 1940.

\bibitem[FK18]{FK18}
Martin Friesen and Yuri Kondratiev.
\newblock Stochastic averaging principle for spatial birth-and-death evolutions
  in the continuum.
\newblock {\em J. Stat. Phys.}, 171(5):842--877, 2018.

\bibitem[FKK09]{FKK08}
Dmitri Finkelshtein, Yuri Kondratiev, and Oleksandr Kutoviy.
\newblock Individual based model with competition in spatial ecology.
\newblock {\em SIAM J. Math. Anal.}, 41(1):297--317, 2009.

\bibitem[FKK13]{FKK13}
Dmitri Finkelshtein, Yuri Kondratiev, and Oleksandr Kutoviy.
\newblock Establishment and fecundity in spatial ecological models: statistical
  approach and kinetic equations.
\newblock {\em Infin. Dimens. Anal. Quantum Probab. Relat. Top.},
  16(2):1350014, 24, 2013.

\bibitem[FKKK15]{FKKK15}
Dmitri Finkelshtein, Yuri Kondratiev, Yuri Kozitsky, and Oleksandr Kutoviy.
\newblock The statistical dynamics of a spatial logistic model and the related
  kinetic equation.
\newblock {\em Math. Models Methods Appl. Sci.}, 25(2):343--370, 2015.

\bibitem[FM04]{FM04}
Nicolas Fournier and Sylvie M\'el\'eard.
\newblock A microscopic probabilistic description of a locally regulated
  population and macroscopic approximations.
\newblock {\em Ann. Appl. Probab.}, 14(4):1880--1919, 2004.

\bibitem[FM16]{FM16}
Nicolas Fournier and St\'ephane Mischler.
\newblock Rate of convergence of the {N}anbu particle system for hard
  potentials and {M}axwell molecules.
\newblock {\em Ann. Probab.}, 44(1):589--627, 2016.

\bibitem[FMS14]{FMS14}
Eugene~A. Feinberg, Manasa Mandava, and Albert~N. Shiryaev.
\newblock On solutions of {K}olmogorov's equations for nonhomogeneous jump
  {M}arkov processes.
\newblock {\em J. Math. Anal. Appl.}, 411(1):261--270, 2014.

\bibitem[Fri16]{FRIESEN15}
Martin Friesen.
\newblock Non-autonomous interacting particle systems in continuum.
\newblock {\em Methods Funct. Anal. Topology}, 22(3):220--244, 2016.

\bibitem[Fri17]{F17}
Martin Friesen.
\newblock Non-equilibrium dynamics for a {W}idom-{R}owlinson type model with
  mutations.
\newblock {\em J. Stat. Phys.}, 166(2):317--353, 2017.

\bibitem[FRS18]{FRS18}
Martin Friesen, Barbara R\"udiger, and Padmanabhan Sundar.
\newblock The {E}nskog process for hard and soft potentials.
\newblock {\em arXiv:1805.10141 [math-ph]}, 2018.

\bibitem[Gra14]{GRAFAKOS04}
Loukas Grafakos.
\newblock {\em Classical {F}ourier analysis}, volume 249 of {\em Graduate Texts
  in Mathematics}.
\newblock Springer, New York, third edition, 2014.

\bibitem[HK90]{HK90}
Joseph Horowitz and Rajeeva~L. Karandikar.
\newblock Martingale problems associated with the {B}oltzmann equation.
\newblock In {\em Seminar on {S}tochastic {P}rocesses, 1989 ({S}an {D}iego,
  {CA}, 1989)}, volume~18 of {\em Progr. Probab.}, pages 75--122. Birkh\"auser
  Boston, Boston, MA, 1990.

\bibitem[KK06]{KK06}
Yuri Kondratiev and Oleksandr Kutoviy.
\newblock On the metrical properties of the configuration space.
\newblock {\em Math. Nachr.}, 279(7):774--783, 2006.

\bibitem[KK18]{KK18}
Yuri Kondratiev and Yuri Kozitsky.
\newblock The evolution of states in a spatial population model.
\newblock {\em J. Dynam. Differential Equations}, 30(1):135--173, 2018.

\bibitem[KL05]{KL05}
Yuri Kondratiev and Eugene Lytvynov.
\newblock Glauber dynamics of continuous particle systems.
\newblock {\em Ann. Inst. H. Poincar\'e Probab. Statist.}, 41(4):685--702,
  2005.

\bibitem[Kol06]{KOL06}
Vassili Kolokoltsov.
\newblock Kinetic equations for the pure jump models of {$k$}-nary interacting
  particle systems.
\newblock {\em Markov Process. Related Fields}, 12(1):95--138, 2006.

\bibitem[Kur73]{KURTZ73}
Thomas~G. Kurtz.
\newblock A limit theorem for perturbed operator semigroups with applications
  to random evolutions.
\newblock {\em J. Functional Analysis}, 12:55--67, 1973.

\bibitem[Pin91]{PINSKY}
Mark~A. Pinsky.
\newblock {\em Lectures on random evolution}.
\newblock World Scientific Publishing Co., Inc., River Edge, NJ, 1991.

\bibitem[SHS02]{SHS}
Anatoli~V. Skorokhod, Frank~C. Hoppensteadt, and Habib Salehi.
\newblock {\em Random perturbation methods with applications in science and
  engineering}, volume 150 of {\em Applied Mathematical Sciences}.
\newblock Springer-Verlag, New York, 2002.

\bibitem[Spo80]{S80}
Herbert Spohn.
\newblock Kinetic equations from hamiltonian dynamics: Markovian limits.
\newblock {\em Reviews of Modern Physics}, 52(3):569--615, 1980.

\bibitem[TV06]{TV06}
Horst Thieme and J\"urgen Voigt.
\newblock Stochastic semigroups: their construction by perturbation and
  approximation.
\newblock In {\em Positivity {IV}---theory and applications}, pages 135--146.
  Tech. Univ. Dresden, Dresden, 2006.

\bibitem[Xu18]{X18}
Liping Xu.
\newblock Uniqueness and propagation of chaos for the {B}oltzmann equation with
  moderately soft potentials.
\newblock {\em Ann. Appl. Probab.}, 28(2):1136--1189, 2018.

\end{thebibliography}

\end{footnotesize}

\end{document}